\documentclass[runningheads]{llncs}

\usepackage{geometry}
\geometry{
  a4paper,         %
  textwidth=15cm,  %
  textheight=24cm, %
  heightrounded,   %
  hratio=1:1,      %
  vratio=2:3,      %
}

\usepackage[T1]{fontenc}
\usepackage{graphicx}
\input{definitions.sty}

\newcommand{\len}{{{\text{len}}}}

\renewcommand{\Pr}{\mathop{\bf Pr\/}}

\newcommand{\eps}{\epsilon}

\newcommand{\calN}{\mathcal{N}}

\def\<{\langle}
\def\>{\rangle}

\begin{document}
\title{Swim till You Sink: \\Computing the Limit of a Game}
\author{Rashida Hakim\inst{1} \and
Jason Milionis\inst{1} \and Christos Papadimitriou\inst{1}\and
Georgios Piliouras\inst{2}}
\authorrunning{R. Hakim et al.}
\institute{Columbia University \\ (\email{rashida.hakim@columbia.edu}, \email{jm@cs.columbia.edu}, \email{christos@columbia.edu}) \and
Google DeepMind (\email{gpil@deepmind.com})}
\maketitle              %
\begin{abstract}
During 2023, two interesting results were proven about the limit behavior of game dynamics: First, it was shown that there is a game for which no dynamics converges to the Nash equilibria. Second, it was shown that the sink equilibria of a game adequately capture the limit behavior of natural game dynamics. These two results have created a need and opportunity to articulate a principled computational theory of the meaning of the game that is based on game dynamics. Given any game in normal form, and any prior distribution of play, we study the problem of computing the asymptotic behavior of a class of natural dynamics called the noisy replicator dynamics as a limit distribution over the sink equilibria of the game. When the prior distribution has pure strategy support, we prove this distribution can be computed efficiently, in near-linear time to the size of the best-response graph. When the distribution can be sampled --- for example, if it is the uniform distribution over all mixed strategy profiles --- we show through experiments that the limit distribution of reasonably large games can be estimated quite accurately through sampling and simulation.

\keywords{Replicator Dynamics \and Sink Equilibria.}
\end{abstract}
\vspace{-0.8cm}
\begin{figure}[ht]
\centering
\includegraphics[height=10cm,keepaspectratio]{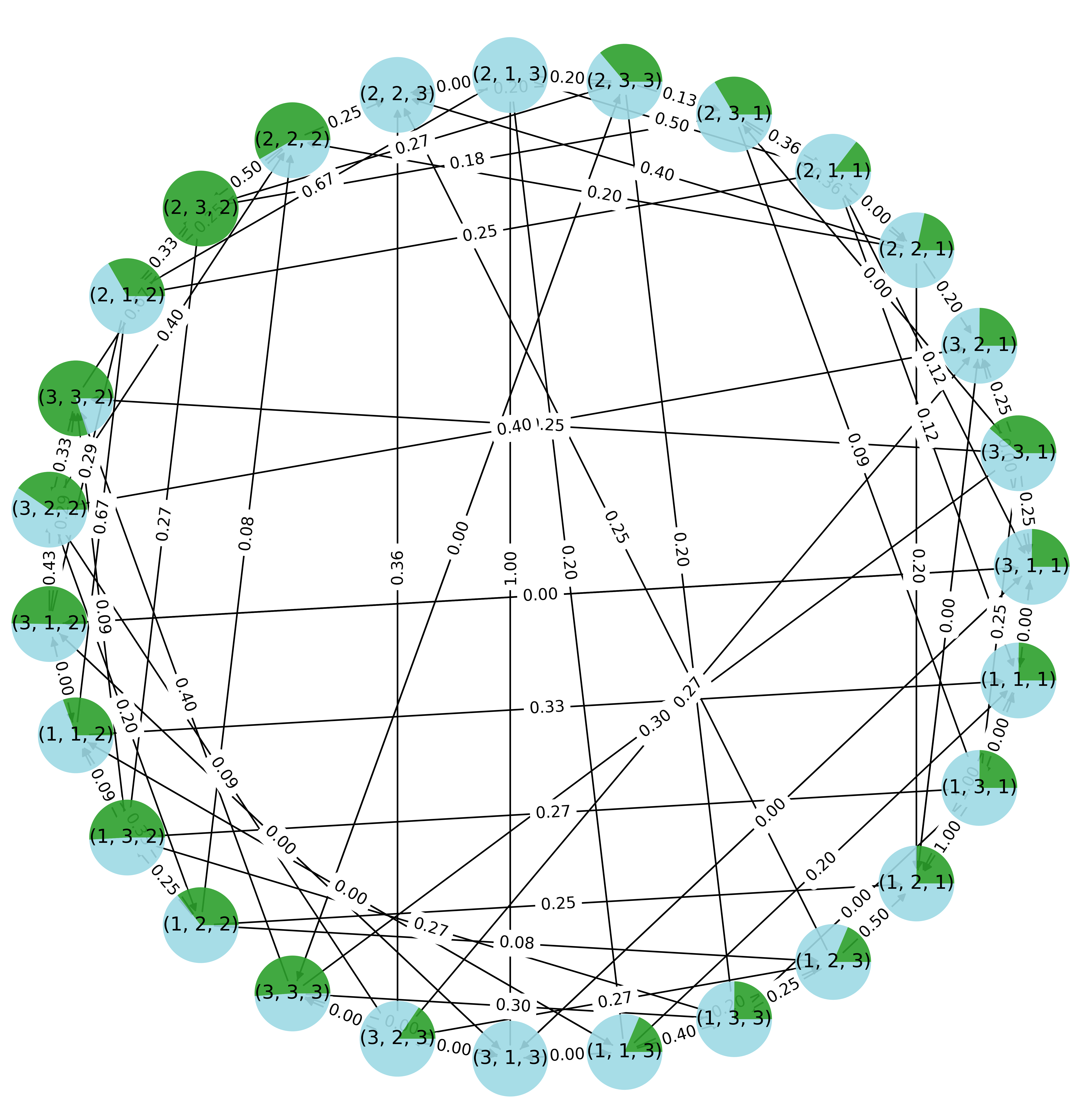}
\vspace{-0.3cm}
\caption{The better-response graph of the $3 \times 3 \times 3$ game depicting the hitting probabilities of the pure profiles as pie charts.}
\label{fig:game:high_pie}
\end{figure}
\section{Introduction}

The Nash equilibrium has been the quintessence of Game Theory. The field started its modern existence in 1950 with Nash's definition and existence theorem, and the Nash equilibrium remained for three quarters of a century its paramount solution concept --- the others exist as refinements, generalizations, or contradistinctions. During the past three decades, during which Game Theory came under intense computational scrutiny, the Nash equilibrium has lost some of its appeal, as its fundamental incompatibility with computation became apparent. The Nash equilibrium has been shown intractable to compute or approximate in normal-form games \cite{DGP,CDT,EY}, while its other well-known deficiency of computational nature --- the ambiguity of non-uniqueness and the quagmire of equilibrium selection \cite{HS} --- had already been known.

A long list of {\em game dynamics} --- that is, dynamical systems, continuous- or discrete-time, defined on mixed strategy profiles --- proposed by economists over the decades are all known to {\em fail} to converge consistently to the Nash equilibrium. This included Nash's own discrete-time dynamics used in his proof, the well-known replicator dynamics treated in this paper, and many others. Given this, the following question acquired some importance:

\begin{quote}
{\em Question 1: We know that every game has a Nash equilibrium. But does every game have a dynamics that converges to the Nash equilibria of the game?} 
\end{quote}

A negative answer would be another serious setback for the Nash equilibrium, and impetus would be added to efforts (see for example \cite{papadimitriou2019game,omidshafiei2019alpha}) to elevate the limit behavior of natural game dynamics as a proposed ``meaning of the game,'' an alternative to the Nash equilibrium. An important obstacle to these efforts was that the nature of the limit behavior of natural dynamics in general games had been lacking the required clarity. It had been known for 40 years since Conley's seminal work \cite{conley1978isolated} that the right concept of limit behavior in a general dynamical system is a system of topological objects known as its {\em chain recurrent components.}  However, this concept is mathematically intractable for general dynamical systems: there can be infinitely many such sets, of unbounded complexity.

\begin{quote}
{\em Question 2: Is there a concrete characterization, in terms of familiar game-theoretic concepts, of the chain recurrent sets in the special case of natural game dynamics in normal-form games?} 
\end{quote}

During this past year, there was important progress on both questions.

\begin{enumerate}
    \item It was proven in \cite{MPPS} that the answer of Question 1 above is negative:  there is a game for which no game dynamics can converge to the Nash equilibria --- that is, there is no dynamics such that the fate of all initial strategy profiles are the Nash equilibria, and the Nash equilibria are themselves fixed points of the dynamical system. Thus, the Nash equilibrium is fundamentally incapable of capturing asymptotic player behavior.

    \item Biggar and Shames establish a useful characterization of the chain recurrent sets of natural recurrent dynamics \cite{BS_sccs}: it was shown that each chain recurrent component of a game under the replicator dynamics 
    contains 
    the union of one or more {\em sink equilibria} of the game. Sink equilibria, first defined by Goemans et al. \cite{goemans_sink_equilibria} in the context of the price of anarchy, are the sink strongly connected components of the better response graph of the game.
\end{enumerate}

We believe that these two results open an important opportunity to articulate a new approach to understanding a normal-form game.  Instead of considering it, as game theorists have been doing so far, as the specification of an intractable equilibrium selection problem, we propose to see it as a specification of the limit behavior of the players. According to this point of view \cite{papadimitriou2019game}, a game is a mapping from a prior distribution over mixed strategy profiles (MSPs) to the resulting limit distribution if the players engage in an intuitive and well accepted natural behavior called {\em noisy replicator dynamics,} which is related to multiplicative weight updates and will be defined soon.  This is the quest we are pursuing and advancing in this paper.

\subsection*{Our contributions}
\begin{itemize}
    \item We propose a concrete, unambiguous, and computationally tractable conception of a game as a mapping from any prior distribution over MSPs to the sink equilibria of the game, namely the limit distribution of the noisy replicator dynamics when initialized at the prior.  
    \item We initiate the study of the efficiency of its computation. As a baby step, in the next section we show that the sink equilibria can be computed in time linear in the description of the game.  We also point out that they are intractable for various families of implicit games.
    \item We prove that the mapping from a prior to a distribution over sink equilibria can be calculated explicitly and efficiently (near linear in the size of the game description) when the prior has pure strategy support. This is highly nontrivial because the better response graph of the game may contain many directed cycles of length two with infinitesimal transition probability $\eps$, corresponding to tie edges; the analysis must be carried out at the $\eps\rightarrow 0$ limit. The algorithm involves a number of novel graph-theoretic concepts and techniques relating to Markov chains, and the deployment of near-linear algorithms for directed Laplacian system solving as well as a dynamic algorithm for incrementally maintaining the strongly connected components (SCCs) of a graph.
    \item We also show through extensive experimentation that the general case (arbitrary prior) can be solved efficiently for quite large games.
\end{itemize}

\subsection*{Related work}

{\bf Non-convergence of learning dynamics in games.}
The difficulty of learning dynamics to converge to Nash equilibria in games is punctuated by a plethora of diverse negative results spanning numerous disciplines such as game theory, economics, computer science and control theory~\cite{BaileyEC18,daskalakis2010learning,paperics11,galla2013complex,papadimitriou2019game,vlatakis2020no,andrade2021learning,andrade2023no,hart2003uncoupled,babichenko2012completely,hsieh2020limits,mertikopoulos2017cycles,cheung2021online,young2007possible}.
Recently, 
\cite{MPPS} capped off this stream of negative results with a general impossibility result showing that there is no game dynamics that achieve global convergence to Nash for all games, a result that is independent of any complexity theoretic or uncoupledness assumptions on the dynamics.   Besides such worst case theoretical results, detailed experimental studies suggest that chaos is commonplace in game dynamics, and emerges even in low dimensional systems across a variety of game theoretic applications~\cite{sanders2018prevalence,sato2002chaos,palaiopanos2017multiplicative,piliouras2023multi,chotibut2020route,bielawski2021follow,leonardos2023optimality,pangallo2017}. 

{\bf Dynamical systems for learning in games.}
This extensive list of non-equilibrating results has inspired a program for linking game theory to  dynamical systems \cite{Entropy18,papadimitriou2019game} and Conley's fundamental theorem of dynamical systems \cite{conley1978isolated}.
These tools have since been applied in multi-agent ML settings such as developing novel rankings as well as training methodologies for agents playing games  \cite{omidshafiei2019alpha,rowland2019multiagent,muller2019generalized,omidshafiei2020navigating}.  
Finally, Peyton Young's paper on conventions \cite{peytonyoung} is an important precursor of our point of view in the Economics literature, focusing in the special case of games in which the sink SCCs are pure strategy equilibria.

{\bf Sink equilibria.} The notion of sink equilibrium, a strongly connected component with no outgoing arcs in the strategy profile graph associated with a game, was introduced in \cite{goemans_sink_equilibria}. They also defined an analogue to the notion of Price of Anarchy~\cite{poa}, the Price of Sinking, the ratio between the worst case value of a sink
equilibrium and the value of the socially optimal solution.  The value of a sink equilibrium is defined as the expected
social value of the steady state distribution induced by a random walk on that sink. Later work established further connections between Price of Sinking and Price of Anarchy via the $(\lambda,\mu)$-robustness framework~\cite{rough09}. A number of negative, PSPACE-hard complexity results for analyzing and approximating sink equilibria have been established in different families of succinct games~\cite{fabrikant2008complexity,mirrokni2009complexity}.

Finally, \cite{hassin1992mean} compute the limiting stationary distribution of an irreducible MC with vanishing edges; their technique can be used in our framework to solve for the time averaged long run behavior within a sink SCC.

\section{Preliminaries}
We assume the standard definitions of a normal-form game $G$ with $p$ players with pure strategy sets $\{S_i\}$, and their utilities $U_i.$  We denote by $|G|$ the size of the description of $G$.  The {\em better-or-equal response graph} $B(G)$ has the pure strategy profiles as nodes, and an edge from $u$ to $v$ if $u$ and $v$ differ in the strategy of only one player $i$, and $U_i(v)\geq U_i(u)$. Let $E$ be the set of edges of $B(G)$. Notice that, because of the tie edges and the transitive edges, the number of edges in the response graph can be much larger than the size of the description of the game, i.e., $|E|=\Omega(|G|)$. The {\em sink equilibria} of $G$ are the sink strongly connected components (sink SCCs) of $B(G)$, that is, maximal sets of nodes with paths between all pairs, such that there is no edge leaving this set.  We shall define many other novel graph-theoretic concepts for this graph in the next section.  Our first theorem delineates the complexity of finding the sink equilibria of a game:

\begin{theorem}
The sink equilibria can be computed in time near-linear in the description of the game presented in normal form, whereas computing them in a graphical game is PSPACE-complete. 
\end{theorem}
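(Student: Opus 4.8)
The plan for the first claim is to reduce the computation of sink equilibria to a standard strongly-connected-components computation, but run on a \emph{sparsified surrogate} of $B(G)$ rather than on $B(G)$ itself. The obstacle that makes this nontrivial is precisely the one flagged above: because of tie edges and transitive edges, $B(G)$ may have $\omega(|G|)$ edges, so merely writing down $B(G)$ and invoking Tarjan's algorithm would already be superlinear. The key structural observation is that $B(G)$ decomposes into \emph{lines}: fixing a player $i$ and a profile $s_{-i}$ of the remaining players, the $|S_i|$ profiles obtained by varying player $i$'s strategy induce edges determined entirely by $U_i$, and every edge of $B(G)$ lies on exactly one such line. Within a line, $u \to v$ exactly when $U_i(v)\ge U_i(u)$; hence reachability inside a line is governed solely by the total preorder that $U_i$ induces on the line.

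This reduces the problem to building, for each line, a sparse gadget with the same reachability relation. First I would sort each line by the deviating player's utility and group its profiles into utility levels. Each level is a clique of mutual tie edges, so I replace it by a single directed cycle through its members; and since one can move only from a strictly worse level to a better one, I add a single arc from each level to the next-higher level. Each line thereby contributes only $O(|S_i|)$ surrogate edges in place of up to $\Theta(|S_i|^2)$ original ones, and one checks directly that within each line the surrogate preserves the relation ``$u$ reaches $v$ iff $U_i(v)\ge U_i(u)$.'' Since the surrogate uses only within-line edges, global reachability---and therefore the partition into SCCs---is preserved. Summing $O(|S_i|)$ over the $\prod_{j\neq i}|S_j|$ lines of each player and over all $p$ players gives $O(p\cdot\prod_j|S_j|)=O(|G|)$ surrogate nodes and edges; the only superlinear cost is the per-line sort, which contributes the near-linear $\log$ factor. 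Running Tarjan's algorithm on the surrogate and reporting the SCCs that are sinks in the condensation DAG then yields the sink equilibria in near-linear time.

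For the graphical-game claim, the decision formulation is whether a given pure profile $s$ lies in a sink equilibrium. Membership in PSPACE follows because $s$ is in a sink SCC iff every profile reachable from $s$ can also reach $s$; reachability in the exponentially large but locally-computable graph $B(G)$ is decidable in nondeterministic polynomial space, and the outer $\forall\exists$ quantification over profiles is absorbed by Savitch's theorem. For hardness I would reduce from acceptance of a polynomial-space Turing machine: encode each machine configuration as a pure profile over $\poly$-many binary-strategy players (tape contents, head position, and state), and design each player's local utility so that the only improving or tie deviations from a profile encoding configuration $c$ are those that rewrite it into the successor configuration of $c$. Accepting configurations are made into a designated absorbing gadget, so that the initial configuration reaches---and is trapped in---this gadget exactly when the machine accepts, which translates into the desired statement about sink membership.

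The step I expect to be the crux is this last simulation. Better-response dynamics is asynchronous and admits tie edges in both directions, whereas a Turing machine step is deterministic and sequential; the utilities must therefore include ``clock'' or token gadgets that serialize the updates and forbid spurious improving deviations, so that no unintended cycles or escape edges corrupt the SCC structure. Verifying that the resulting better-response graph has no sink equilibrium other than the intended one---and that it is locally describable by constant-arity utility functions, as a graphical game requires---is where the real care is needed; this is also where the construction can draw on the succinct-game hardness techniques of \cite{fabrikant2008complexity,mirrokni2009complexity}.
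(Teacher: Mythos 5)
For the first claim your argument is essentially the paper's: both decompose $B(G)$ into the ``lines'' obtained by fixing a player $i$ and the other players' profile, sort each line by the deviating player's utility, and replace the possibly quadratic within-line edge set by an $O(|S_i|)$-edge gadget with the same transitive closure, after which a standard linear-time SCC computation on the resulting $O(|G|)$-size surrogate yields the sink SCCs. Your particular gadget --- a directed cycle through each tie level plus one arc from each level to the next-higher one --- is a correct instantiation (arguably spelled out more carefully than the paper's own at-most-two-edges-per-node construction), and your observation that every surrogate edge is itself an edge of $B(G)$, so reachability is preserved in both directions and hence the SCC partition is unchanged, is exactly the justification needed. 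The divergence is in the second claim: the paper disposes of it in one line by citing the known PSPACE-completeness result \cite{Fabrikant08_Complexity}, whereas you attempt to rebuild the hardness reduction from scratch. Your PSPACE membership argument (``$s$ is in a sink SCC iff every profile reachable from $s$ reaches $s$ back,'' plus Savitch) is fine, but the hardness half is only a plan, and the step you yourself flag as the crux --- designing local, constant-arity utilities that serialize asynchronous better-response moves so that the Turing machine simulation admits no spurious improving or tie deviations and no unintended sink SCCs --- is precisely where such constructions live or die, and it is not carried out. As written this leaves a genuine gap in the hardness direction; the remedy is simply to do what the paper does and invoke the existing result \cite{Fabrikant08_Complexity,fabrikant2008complexity} rather than leave the reduction half-built.
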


\begin{proof}
The first claim follows from the fact that, even though (as pointed out in the preceding paragraph) $B(G)$ has more edges than the size of the description of the game, there is an equivalent graph of linear size with the same transitive closure (and therefore strongly connected components) obtained as follows: For each player and each pure strategy profile for other players, consider the subgraph of only the nodes that correspond to the strategy profile for the other players together with some action for the player. Sort the nodes by their outcome for the player. For each node in order from lowest to highest outcome, create an edge to the next node in increasing order as well as the last node of the same outcome if the next node has a higher outcome (only if this would not be a self-loop). This preserves transitive closure as compared to $B(G)$ and each node has at most outgoing 2 edges (it sums to $\leq 3/2$ edges per node on average in the worst case). Besides sorting this can be done in linear time.

The second claim follows from known results \cite{Fabrikant08_Complexity}; the result holds for other forms of succinct descriptions of games, such as Bayesian or extensive form games.
\end{proof}

Next we define the {\em noisy replicator dynamics} on $G$  \cite{papadimitriou2019game}, a noisy generalization of the classical replicator dynamics \cite{Schuster1983533}.  It is a function mapping the set of MSPs of $G$ to itself as follows:

\begin{itemize}
    \item $\phi(x) = \partial G(x + \eta\cdot\rm{BR}_x + \calN_x(0, \delta)) $, where 
    \item BR$_x$ is the unit best response vector at $x$ {\em projected to the subspace of $x$} --- that is, containing zeros at all coordinates in which $x$ is zero; this ensures that the support of $x$ never increases;
    \item $\calN(0,\delta)_x$ is Gaussian noise, { also projected};
    \item the function $\partial G$ maps $(x+ \eta\cdot\rm{BR}_x + \calN(0, \delta))_x$ either to itself, if it is inside the domain of the game's MSPs, or to the closest point in the support's boundary, otherwise; 
    \item and $\delta, \eta >0$ are important small parameters.
\end{itemize}

\paragraph{Justification.} The replicator dynamics \cite{taylor1978evolution,Schuster1983533} has been for four decades the standard model for the evolution of strategic behavior.  In connection with Economics and Game Theory, it has the important advantage of invariance under positive affine changes in the players' utilities.  For our purposes, it is approximated via the noisy version of Multiplicative Weights Update (MWU) \cite{Arora05themultiplicative}. Projecting the noise to the support of the current MSP $x$ is motivated by evolution and extinction, and is instrumental for fast convergence. This precise dynamics has been used extensively in reinforcement learning for game play, see for example \cite{omidshafiei2019alpha,omidshafiei2020navigating}.

Finally, we define a dynamics on the {\em pure} strategy profiles (that is, a Markov chain), called the \emph{Conley-Markov Chain} of $G$, or CMC$(G)$ \cite{papadimitriou2019game,goemans_sink_equilibria}.  If $(u,v)$ is an edge of $B(G)$ corresponding to a defection of player $i$, its probability in CMC$(G)$ is proportional to $U_i(v)-U_i(u)$, with the edges out of each node $u$ normalized to one.  It is not hard to see that this is the limit of the noisy replicator dynamics as the noise goes to zero and the MSP goes to $u$.  Importantly, however, CMC$(G)$ also has an infinitesimal probability $\eps$ for each tie edge. (Note that this probability is used symbolically as it descends to zero, and, in the interest of clarity, it does not affect the normalization at $u$.) This treatment of tie edges reflects two things: First, it was shown in \cite{biggar2023attractor} that tie edges must be included in the calculation of the sink equilibria for their theorem to hold; and second, to incorporate tie edges in a way compatible with Conley's theorem \cite{conley1978isolated} is to think of them as conduits of a {\em balanced random walk} on the undirected edge between the two nodes in which the MSP is changing via tiny steps of $\sigma$ at a time, so that it will take $\Theta({1/ \sigma^2})$ steps for the transition to be completed, justifying its infinitesimal transition probability.

\section{A Combinatorial Algorithm for the Hitting Probabilities}
\label{sec:algo}

We start by collapsing all sink SCCs of CMC$(G)$ to single absorbing nodes. Our main goal is to compute the hitting probabilities %
from each node $i$ of $CMC(G)$ to each of the absorbing nodes --- that is, the probability that a path starting from $i$ will end up in the node --- albeit {\em in the limit as $\eps\rightarrow 0$.}  %
The hitting probabilities can be defined in two equivalent ways, both of which we will use in our proofs. Define $h_{iS}$ to be the hitting probability of node $i$ to sink $S$, that is, the probability that a path from $i$ will eventually be absorbed by $S$. Let $p_{ij}$ be the transition probability of node $i$ to node $j$ (the weight of the edge $(i, j)$ or 0 if there is no edge). Then the hitting probabilities are the smallest non-negative numbers that satisfy the following system of equations. 
\begin{align}\label{hp1}
  h_{iS}=
  \begin{cases}
    \sum\limits_{j\in\text{nodes}} p_{ij}h_{jS}, & \text{if $i\not\in S$}.\\
    \hfil 1,                    %
    & \text{if $i\in S$}
  \end{cases}
\end{align}

If we define $\Psi_{iS}$ as the (potentially infinite) set of paths that start at $i$ and end at some node in $S$, then we equivalently have the following set of equations.
\begin{align}\label{hp2}
h_{iS} = \sum_{p \in \Psi_{iS}}\Pr[p]
\end{align}

In this section we prove the following:
\begin{theorem}
The limit hitting probabilities of CMC$(G)$ can be computed in time $O(|E|^{\frac{4}{3}})$, where $E$ is the set of edges of CMC$(G)$.
\end{theorem}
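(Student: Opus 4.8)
The plan is to treat CMC$(G)$ as a singularly perturbed Markov chain and to compute the limit of its absorption probabilities by a hierarchical condensation that separates the two timescales: the $\Theta(1)$-probability strict (better-response) edges and the infinitesimal tie edges of weight $\eps$. First I would discard the tie edges and compute the strongly connected components of the strict subgraph of CMC$(G)$, condensing it into its DAG of SCCs. Every non-absorbing leaf $C$ of this DAG is a set of nodes that is closed under strict transitions but from which the walk can only escape through a tie edge; since such an escape happens with probability $1$ as $\eps\to 0$, the behaviour inside $C$ is governed, in the limit, purely by the strict transition matrix restricted to $C$, and the task reduces to determining the law by which the walk exits $C$.

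The core quantity to compute for each such leaf $C$ is therefore its exit distribution. Because $C$ is strongly connected and closed under strict transitions, the walk equilibrates to the stationary distribution $\pi^C$ of the strict chain on $C$ before it ever takes a tie edge, so in the limit $\eps\to 0$ the probability that the walk leaves $C$ through a particular tie edge $(v,w)$ is proportional to $\pi^C_v$ — every tie edge carries the same weight $\eps$, which cancels upon normalization, and tie edges internal to $C$ contribute only $O(\eps)$ corrections that drop out. Computing $\pi^C$ is exactly the problem of finding the stationary distribution of a directed Markov chain, which reduces to solving a directed Laplacian system and can be done in near-linear time in the number of edges of $C$ by the recent directed-Laplacian solvers. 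Collapsing $C$ to a single super-node whose out-edges are its tie exits, reweighted by $\pi^C$, removes $C$ from the unresolved part of the chain while exactly preserving, in the limit, the absorption probabilities of every node of $C$ prescribed by \eqref{hp1}.

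Next I iterate: after collapsing $C$, its tie exits may point into other SCCs and, crucially, a tie edge leading back into (the nodes now represented by) $C$ can create a new directed cycle, merging several former SCCs into a single larger recurrent class that must itself be equilibrated and collapsed. I would therefore process the tie edges incrementally, re-running the SCC computation as edges are inserted and aggregating each newly-formed sink class by the same stationary-distribution-and-collapse step; this produces a properly nested (Freidlin--Wentzell style) aggregation of the chain down to its absorbing sinks. Maintaining the SCCs under edge insertions is done with an incremental SCC / cycle-detection data structure, and the $O(|E|^{4/3})$ bound in the theorem is precisely the cost of this dynamic SCC maintenance, which dominates the near-linear Laplacian solves: since the aggregated classes are vertex-disjoint refinements, the sizes of all the systems solved across all levels sum to $O(|E|)$. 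Once every sink class is absorbing, the residual structure is a DAG and the hitting probabilities $h_{iS}$ are read off by a single back-substitution along its topological order.

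The main obstacle is making the $\eps\to 0$ limit rigorous rather than heuristic: I must prove that the true limit $\lim_{\eps\to 0} h_{iS}$ of the finite-$\eps$ absorption probabilities coincides with the value produced by the condensation, i.e. that to leading order the exit law of each recurrent class is exactly the stationary-weighted tie-edge distribution and that all higher-order terms in $\eps$ vanish uniformly. Concretely this means showing that the quasi-stationary law of a closed class conditioned on not-yet-exiting converges to $\pi^C$, that the relative exit rates through distinct tie edges scale linearly in $\eps$ with no lower-order interference between nested classes, and that the nested aggregation is consistent, in the sense that the super-node exit law computed at one level is the correct marginal of the finer-level limit. A secondary difficulty is the accounting that guarantees the Laplacian-solve work telescopes to $O(|E|)$ over all condensation levels, so that the incremental SCC maintenance is genuinely the bottleneck and the overall running time is $O(|E|^{4/3})$.
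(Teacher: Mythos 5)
Your proposal is, in essence, the paper's own algorithm: your non-absorbing leaves of the strict condensation are the paper's \emph{pseudosinks}, your stationary-weighted exit law is the paper's tie-edge-coefficient lemma and the collapse weights $W(P,y)$ of Definition~\ref{weight}, and your iterate-and-recondense loop with incremental SCC maintenance charged for the $O(|E|^{4/3})$ term is exactly the paper's Steps 3--5 and its running-time accounting. The analytic content you flag as ``the main obstacle'' --- that the law conditioned on not yet having exited converges to $\pi^C$, that exit rates scale linearly in $\eps$ with vanishing lower-order interference, and that the nested aggregation is consistent --- is precisely what the paper proves (via a cutoff at $t_1\approx 1/\sqrt{\eps}$ plus geometric convergence to stationarity, and a path-decomposition argument for the collapse), so you have correctly identified, rather than closed, the hard part.

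Two concrete slips. First, the claim that ``every tie edge carries the same weight $\eps$, which cancels upon normalization'' is only true before the first collapse: once a pseudosink is contracted, parallel tie edges from one external node into distinct collapsed vertices merge into a single tie edge with coefficient $c_e>1$, so the analysis must be carried out for general $\eps$-chains with weights $c_e\eps$, and the exit law of a later pseudosink is proportional to $c_e\,\pi^C_v$ per edge, not to $\pi^C_v$; dropping the coefficients gives wrong limits after the first round. Second, after all pseudosinks are eliminated the residual chain is \emph{not} a DAG: its condensation is a DAG, but non-trivial transient SCCs of regular edges survive (they were never leaves, so your procedure never touches them), and the hitting probabilities there satisfy a genuine linear system of the form of Eq.~\ref{hp1}, not a back-substitution along a topological order. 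The fix is the tool you already invoke for stationary distributions --- a near-linear escape-probability/directed-Laplacian solve on the final regular chain, as in the paper --- and it does not disturb the $O(|E|^{4/3})$ bound. You also owe an explicit termination argument for the recondensation loop; the paper's potential is the maximum number of tie edges on a shortest path to a sink, which drops by at least one per round of collapses.
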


Significant progress has been made in solving linear systems associated with weighted directed graphs such as Equation \ref{hp1} faster than the time required to solve arbitrary linear systems. %
Two problems can now be solved in almost-linear time in the size of the graph: computing the stationary distribution of an irreducible Markov chain (hence abbreviated MC); and computing the escape probabilities \cite{cohen2017almost}. The computation of escape probabilities in a random walk maps directly to the problem of computing hitting probabilities in a MC. So we have fast algorithms for our problem in the case of no tie edges. However, the introduction of tie edges creates an ill-conditioned problem, and we are interested in its solution as $\epsilon\rightarrow 0$. A possible approach would be to solve the system of equations given in Eq. \ref{hp1} symbolically and then take limits as $\epsilon\rightarrow0$; however, solving large systems of equations symbolically is intractable. Instead, we take a combinatorial approach to transform any given CMC$(G)$ into a simpler MC that preserves the limit hitting probabilities of the original CMC$(G)$ but eventually has no tie-edges. The hitting probabilities of this simplified MC can be computed in almost linear time as mentioned above.%

\subsection{Outline of the Algorithm}
\begin{enumerate}
    \item The input to the algorithm is CMC$(G)$ --- actually, it could be any $\epsilon$-MC $M$ with absorbing nodes. The output is the list of hitting probabilities  $\{h_{iS}\}$, the probabilities that the sink SCCs of the graph will be reached by each of the nodes in the rest of the graph.
    \item We start by collapsing the sink SCCs of $M$.  
    \item We calculate the SCC's of $M$ {\em without} the $\epsilon$-edges, called {\em rSCC's}. This makes sense since $\epsilon$ edges are traversed at a far slower rate than the rest.
    \item Next we must handle a phenomenon called a {\em pseudosink}, an rSCC that only has $\epsilon$-edges outgoing. Within a pseudosink, the MC achieves convergence to a steady state before exiting, and therefore all of its nodes have the same hitting probabilities. The pseudosinks are identified one by one and collapsed, with their outgoing $\epsilon$-edges replaced by regular edges in accordance with Def. \ref{weight}.  A simple disjoint set data structure can track the original vertices through the collapses.
    \item A complication is that the collapsed pseudosinks acquire new regular edges to the rest of the graph, and as a result the rSCCs of the graph  must be recalculated. This procedure may also create new pseudosinks, so steps 3 and 4 are repeated until no more pseudosinks exist. 
    \item Once all pseudosinks have been removed this way, any remaining $\epsilon$-edges do not affect the hitting probabilities and can therefore be deleted.  At this point, the hitting probabilities can be computed in almost linear time.
\end{enumerate}

\subsection{Definitions}
\begin{definition}
\textbf{$\epsilon$-Markov Chain: } A $\epsilon$-Markov Chain ($\eps$-MC) is a Markov chain that has two types of edges: regular edges which have weights $c_r - c_{re}\epsilon$ for constants $c_r > 0$ and $c_{re} \geq 0$ and $\epsilon$ edges which have weights $c_e\epsilon$ for constant $c_e > 0$. Thus, the CMC$(G)$ is an $\epsilon$-MC.  As the values of the coefficients $c_{re}$ in the regular edges do not affect the limiting hitting probabilities we shall ignore them. 
\end{definition}
\begin{definition}
\textbf{Sink SCC:} A sink SCC $S$ is a maximal set of nodes that are strongly connected (including connected via $\epsilon$-edges) that has no outgoing edges.
\end{definition}
\begin{definition}
\textbf{rSCC:} A rSCC is a maximal set of nodes that is strongly connected via regular edges. An rSCC may contain $\epsilon$-edges between nodes within the rSCC, but every node is reachable from every other node without requiring $\epsilon$-edges. 
\end{definition}
\begin{definition}
\textbf{Pseudosink:} A pseudosink $P$ is a rSCC that has at least one outgoing $\epsilon$-edge and no outgoing regular edges.
\end{definition}
\begin{definition}
\textbf{Order:} The order of a node $i$, $\textsc{Order}(i)$, is the minimum number of tie edges that exist on a path from $i$ to \textit{any} sink SCC. The maximum order of the current MC, \textsc{MaxOrder}$(M)$, is our gauge of progress in the algorithm. 
\end{definition}
\begin{definition}\label{weight}
The {\bf weight} of a new regular edge from a collapsed pseudosink to node $y$ is as follows (here $O$ is the set of outgoing $\epsilon$-edges from the pseudosink $P$). 
$$W(P, y) = \frac{\sum_{e\in O: e = (x, y)}c_{e}\pi_{P}[x]}{\sum_{e' = (x', y')\in O}c_{e'}\pi_P[x']},$$
where $\pi_{P}[x]$ is the steady-state probability of $x$ within $P$, computed using only regular edges.
\end{definition}

\subsection{Algorithm Correctness}
Throughout the algorithm, we maintain a MC that we denote $M$, initially the MCM$(G)$ with all sink SCCs collapsed.  There are two aspects to validate. The first is that the algorithm progresses until $M$ has no remaining $\epsilon$-edges. The second is that $M$ maintains the property that at all stages the limit hitting probabilities of the original nodes are maintained through collapsing pseudosinks (step 2) and deleting $\epsilon$-edges. %

\subsubsection{Algorithm Progression} 
\begin{lemma} If \textsc{MaxOrder}$(M) \geq 1$ then $M$ contains a pseudosink.
\end{lemma}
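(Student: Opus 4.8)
The plan is to prove the contrapositive by analyzing the structure of $M$ after deleting its $\epsilon$-edges. First I would form the regular-edge subgraph $M_r$ obtained by keeping only the regular edges of $M$, compute its rSCCs, and pass to the condensation DAG $D$ whose vertices are the rSCCs and whose arcs are the regular edges running between distinct rSCCs. Since $D$ is finite and acyclic, it has at least one sink (a \emph{leaf}: an rSCC with no outgoing regular arc), and from every node of $M$ one can follow regular edges inside $M_r$ to reach some leaf of $D$. It is also worth recording at the outset that every node can reach some sink SCC in the underlying directed graph --- this holds in the original CMC$(G)$ and is preserved by collapsing --- so that $\textsc{Order}(i)$ is finite and well defined for every $i$ and the hypothesis $\textsc{MaxOrder}(M)\ge 1$ is meaningful.

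The crux is the classification of the leaves of $D$. Let $R$ be a leaf, so $R$ is an rSCC with no outgoing regular edge. There are two cases. If $R$ has an outgoing $\epsilon$-edge, then $R$ is, by the very definition of pseudosink, a pseudosink. If instead $R$ has no outgoing edge of either type, then $R$ is strongly connected via regular edges (hence a fortiori strongly connected when $\epsilon$-edges are also allowed) and has no outgoing edges at all, so $R$ is a sink SCC; but all sink SCCs were collapsed to absorbing nodes at the start of the algorithm, so in this case $R$ is a single absorbing node. Hence every leaf of $D$ is either a pseudosink or a (collapsed) absorbing node.

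I would then close the argument by contraposition. Suppose $M$ contains no pseudosink. By the classification above, every leaf of $D$ must then be an absorbing node. Take an arbitrary node $i$ of $M$; following regular edges inside $M_r$ it reaches some leaf of $D$, which is forced to be an absorbing node, i.e.\ a sink SCC. This exhibits a path from $i$ to a sink SCC that uses no $\epsilon$-edges, so $\textsc{Order}(i)=0$. As $i$ was arbitrary, $\textsc{MaxOrder}(M)=0$, contradicting the assumption $\textsc{MaxOrder}(M)\ge 1$. Therefore $M$ must contain a pseudosink.

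The step I expect to require the most care is the leaf classification --- specifically, justifying that a dead-end rSCC with no outgoing edges of either type is necessarily a sink SCC that has already been collapsed, rather than some leftover component. This rests on the standing invariant that all sink SCCs are collapsed into absorbing nodes before this phase, so I would make that invariant explicit and appeal to it directly; the remaining reachability claims are routine facts about condensation DAGs.
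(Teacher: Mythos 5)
Your proof is correct and rests on the same combinatorial engine as the paper's: take the rSCC (regular-edge) condensation DAG, observe that a finite DAG has a leaf, and identify a leaf with no outgoing regular edges as a pseudosink. The only difference is packaging --- the paper applies the decomposition directly to the set of nodes attaining $\textsc{Order}(i)=\textsc{MaxOrder}(M)$ (using that regular edges cannot decrease order, so a leaf there of order $\geq 1$ cannot be absorbing and must be a pseudosink), whereas you condense the whole graph, classify every leaf as either a pseudosink or a collapsed sink SCC, and argue by contraposition; both are sound and of essentially equal length.
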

\begin{proof} Consider the set of all nodes $i$ that achieve $\textsc{Order}(i) = \textsc{MaxOrder}(M)$. By definition, from this set of nodes, the MC cannot reach any other nodes without using $\epsilon$-edges. Consider the rSCC decomposition of this set of nodes. The regular edges between rSCCs induce a directed acyclic graph on this set of nodes. Every finite DAG has at least one leaf, defined as a vertex that has no outgoing edges. Each leaf is a rSCC with no outgoing regular edges and therefore is a pseudosink.
\end{proof}
\begin{lemma}\label{order reduction} Collapsing all pseudosinks in $M$ reduces the maximum order of $M$ by at least 1.
\end{lemma}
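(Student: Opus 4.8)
The plan is to show that collapsing all pseudosinks lowers the order of \emph{every} maximum-order node to at most $k-1$, where $k=\textsc{MaxOrder}(M)$. Since collapsing only merges nodes and replaces outgoing $\epsilon$-edges of a pseudosink by \emph{regular} edges (Def.~\ref{weight}), no path ever gains an $\epsilon$-edge; hence $\textsc{Order}$ is nonincreasing for every node, and it suffices to control the nodes of order exactly $k$. I would first record three monotonicity facts that follow immediately from the definition of $\textsc{Order}$: (i) along a regular edge $i\to j$ we have $\textsc{Order}(i)\le\textsc{Order}(j)$ (prepend the edge to a shortest path from $j$); (ii) along an $\epsilon$-edge $i\to j$ we have $\textsc{Order}(i)\le\textsc{Order}(j)+1$; and (iii) collapsing never increases any node's order.

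By (i), the set $V_k$ of nodes with $\textsc{Order}=k$ is closed under regular edges, so the rSCC decomposition of $V_k$ forms a DAG whose leaves have no outgoing regular edges and at least one outgoing $\epsilon$-edge, i.e.\ are pseudosinks, exactly as in the previous lemma. The key structural claim is that each such leaf $L$ has an outgoing $\epsilon$-edge whose head has order exactly $k-1$. To see this, take any $u\in L$ and a shortest path from $u$ to a sink, which uses exactly $k$ $\epsilon$-edges; let $v_t\to v_{t+1}$ be its first $\epsilon$-edge. The prefix from $u$ to $v_t$ consists of regular edges, so by (i) and maximality $\textsc{Order}(v_t)=k$, and since $L$ is a leaf this regular prefix stays inside $L$, giving $v_t\in L$. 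The suffix from $v_{t+1}$ realizes $k-1$ $\epsilon$-edges, so $\textsc{Order}(v_{t+1})\le k-1$, while (ii) gives $\textsc{Order}(v_{t+1})\ge \textsc{Order}(v_t)-1=k-1$; hence $\textsc{Order}(v_{t+1})=k-1$ and, being of lower order, $v_{t+1}\notin L$.

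Finally I would assemble these pieces. Let $i$ be any node with $\textsc{Order}(i)=k$. In the DAG of $V_k$, the rSCC of $i$ reaches some leaf $L$ by following outgoing regular edges, so there is a regular path from $i$ into $L$, and this path survives the collapses as a regular path without acquiring any $\epsilon$-edge. By the structural claim, $L$ has an outgoing $\epsilon$-edge to a node $w$ with $\textsc{Order}(w)=k-1$; after collapsing, Def.~\ref{weight} turns this into a regular edge from the collapsed node $\hat L$ to $w$. Concatenating the regular path from $i$ to $\hat L$, the new regular edge $\hat L\to w$, and a shortest path from $w$, and using (iii) to bound the new order of $w$ by $k-1$, we conclude that $i$ reaches a sink using at most $k-1$ $\epsilon$-edges in the collapsed chain. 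Thus every order-$k$ node drops to order at most $k-1$, while all other nodes already had order at most $k-1$ and retain it by (iii); hence $\textsc{MaxOrder}$ decreases by at least one. I expect the main obstacle to be the structural claim, specifically arguing that the first $\epsilon$-edge of a shortest path leaving a leaf pseudosink both stays inside the leaf and lands at order exactly $k-1$ rather than merely $\ge k-1$, since it is precisely this exact one-step drop that makes the converted regular edge a genuine shortcut.
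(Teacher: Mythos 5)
Your proof is correct and follows essentially the same route as the paper's: restrict to the set of maximum-order nodes, observe that this set is closed under regular edges so its rSCC DAG has pseudosink leaves, note that the first $\epsilon$-edge of a witnessing shortest path out of each leaf becomes a regular edge after the collapse, and propagate the resulting order drop back along regular paths to every maximum-order node. You simply make explicit the monotonicity facts (order nondecreasing along regular edges, nonincreasing under collapse) that the paper's terser argument uses implicitly.
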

\begin{proof} 
Again, consider the set of all nodes $i$ that that have $\textsc{Order}(i)$ equal to $\textsc{MaxOrder}(M)$ and the DAG representing the rSCC structure of this set. Each leaf of the DAG is a pseudosink. From each pseudosink, there exists a path to some sink SCC that achieves the order $\textsc{MaxOrder}(M)$. Collapsing the pseudosink replaces all outgoing $\epsilon$-edges with regular edges and therefore this same path must now achieve the order $\textsc{MaxOrder}(M) - 1$. Since every rSCC in the DAG is either a leaf or has a regular path to a leaf, all nodes that previously achieved $i$ will now have $\textsc{Order}(i) \leq \textsc{MaxOrder}(M) - 1$. So the maximum order of $M$ is reduced by at least 1.
\end{proof}
Combining these two lemmas means that at each stage of our algorithm we will find one or more pseudosinks and collapse them, decreasing the maximum order by at least 1. So the algorithm will progress until the maximum order reaches 0, at which point we delete all remaining $\epsilon$-edges and are left with a Markov Chain with only regular edges.

\subsubsection{Pseudosink Collapse}

\begin{lemma}\label{tie edge coefficient} Let $P$ be a pseudosink. Let $O$ be the set of outgoing $\epsilon$-edges from $P$. Let $L_e$ be the event that a Markov chain started at any $i\in N_P$ will take $(e = (x, y)) \in O$, where $W_M(e) = c_e\epsilon$ is the weight in $M$ of edge $e$. Then
$$\lim_{\epsilon\rightarrow 0}\Pr[L_e] = \frac{c_{e}\pi_{P}[x] }{\sum_{e'\in O}c_{e'}\pi_{P}}$$ 
This immediately implies that $W(P, y)$ from Definition \ref{weight} is the probability that $y$ is the first node outside of $P$ that a chain started at any $i\in N_P$ will travel to.
\end{lemma}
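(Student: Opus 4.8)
The plan is to express the first exit from the pseudosink $P$ as an absorbing-chain computation and then control the resulting fundamental matrix as a singular perturbation in $\epsilon$. Since $P$ is an rSCC, the regular edges alone make it strongly connected, so the regular-edge transition matrix $P_0$ (with $(P_0)_{xy}=c_r$ for each regular edge $(x,y)$, genuinely stochastic because the regular weights out of each node sum to $1$ at $\epsilon=0$) is irreducible with a unique stationary row vector $\pi_P$. Let $Q(\epsilon)$ be the substochastic matrix of transitions that remain inside $P$; writing $Q=P_0+\epsilon R$, the matrix $R$ records the internal $\epsilon$-edges together with the $O(\epsilon)$ corrections on the regular edges, and normalization forces the row-sum deficit of $Q$ at a node $x$ to equal exactly $\epsilon d_x$, where $d_x=\sum_{e=(x,\cdot)\in O}c_e$ is the total outgoing-$\epsilon$ rate (internal $\epsilon$-edges keep their mass in $P$). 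For $\epsilon>0$ the chain leaves $P$ almost surely, so $I-Q$ is invertible and $N:=(I-Q)^{-1}$ is the fundamental matrix, whose entry $N_{ix}$ is the expected number of visits to $x$ before exit, starting from $i\in N_P$.

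First I would record the exact finite-$\epsilon$ identity: the probability that the first edge used to leave $P$ is $e=(x,y)\in O$ equals the expected number of visits to $x$ times the per-visit firing probability of $e$, namely
$$\Pr[L_e]=c_e\,\epsilon\,N_{ix}.$$
Summing over $O$ gives $\sum_{e\in O}\Pr[L_e]=1$, so the entire content of the lemma is the $\epsilon\to0$ behaviour of $\epsilon N_{ix}$.

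The heart of the argument is a singular-perturbation estimate for $N=(I-P_0-\epsilon R)^{-1}$. Because $P_0$ is irreducible stochastic, $I-P_0$ has a \emph{simple} zero eigenvalue with right null vector $\bone$ and left null vector $\pi_P$, so the inverse blows up at exactly first order in $1/\epsilon$ and the leading coefficient is the spectral projection onto the null space:
$$N=\frac{1}{\epsilon}\,\frac{\bone\,\pi_P}{-\pi_P R\,\bone}+O(1)=\frac{1}{\epsilon Z}\,\bone\,\pi_P+O(1),$$
where $Z=-\pi_P R\,\bone=\sum_x\pi_P[x]\,d_x=\sum_{e=(x,y)\in O}c_e\,\pi_P[x]$, using that the row sums of $R$ are $-d_x$. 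Each row of the leading term is the same, so $\epsilon N_{ix}\to\pi_P[x]/Z$ \emph{independently of the start} $i$; substituting into the identity above yields $\lim_{\epsilon\to0}\Pr[L_e]=c_e\,\pi_P[x]/Z$, exactly the claim, and $W(P,y)$ from Definition~\ref{weight} is then the total limiting probability of leaving $P$ into $y$, obtained by summing over the parallel outgoing edges that land on $y$.

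I expect the main obstacle to be the rigorous justification of this leading term: that the $1/\epsilon$ blow-up of $(I-P_0-\epsilon R)^{-1}$ carries precisely the coefficient $\bone\,\pi_P/Z$, with a genuine $O(1)$ remainder and no dependence on $i$. Rather than invoking a black-box group-inverse formula, I would prefer the equivalent probabilistic route: show that the exit time is $\Theta(1/\epsilon)$ while the regular-edge chain mixes in $O(1)$ steps, so the occupation measure before exit converges to $\pi_P$; then $N_{ix}/N_{ix'}\to\pi_P[x]/\pi_P[x']$, and the normalization $\sum_{e\in O}\Pr[L_e]=1$ pins down the constant $Z$ without any delicate $c_{re}$ bookkeeping. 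Either route reduces to the same fact, so the only real care needed is confirming the simple-zero-eigenvalue structure of $I-P_0$, which is guaranteed precisely by $P$ being an rSCC.
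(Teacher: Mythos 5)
Your proof is correct, but your primary route is genuinely different from the paper's. You compute the exit distribution exactly at finite $\epsilon$ via the fundamental matrix $N=(I-Q)^{-1}$, reducing the lemma to the identity $\Pr[L_e]=c_e\,\epsilon\,N_{ix}$ plus a resolvent expansion: since $I-P_0$ has an algebraically simple zero eigenvalue (Perron--Frobenius for the irreducible regular-edge chain on the rSCC $P$), one gets $(I-P_0-\epsilon R)^{-1}=\tfrac{1}{\epsilon Z}\bone\pi_P+O(1)$ with $Z=\sum_{e=(x,y)\in O}c_e\pi_P[x]$. The paper instead argues probabilistically in the time domain: it splits the exit time at $t_1=1/\sqrt{\epsilon}$ (plus a uniformly random offset over the period $d$ to handle periodicity), shows the pre-$t_1$ contribution vanishes, uses geometric convergence of the periodic subsequences to replace $\Pr[X_t=x\mid T_L\ge t]$ by $\pi_P[x]+n(t_1)$, and then exploits the normalization $\sum_{e\in O}\Pr[L_e]=1$ to identify the common factor $\lim_{\epsilon\to 0}\sum_t\epsilon\Pr[T_L\ge t]=1/Z$ without ever computing it directly --- this is essentially the \emph{secondary} probabilistic route you sketch at the end. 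What your main route buys: it sidesteps the periodicity bookkeeping entirely (the zero eigenvalue of $I-P_0$ is simple regardless of period, and $N$ exists whenever exit is almost sure), it makes the independence of the limit from the start node $i$ transparent (all rows of $\bone\pi_P$ coincide), and it isolates the single analytic fact that needs justification --- the first-order pole of the resolvent with residue $\bone\pi_P/Z$. That fact is standard singular-perturbation theory for a simple eigenvalue, but as you yourself flag, it should be proved or precisely cited rather than asserted; the paper's longer argument is elementary and self-contained modulo a rate-of-convergence citation. Your bookkeeping of the row-sum deficit ($Q=P_0+\epsilon R$ with $R\bone=-d$, $d_x=\sum_{e=(x,\cdot)\in O}c_e$, using that a pseudosink has no outgoing regular edges) is consistent with the paper's normalization convention, so no gap there.
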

\begin{proof}
Define $L_e$ as the event that $e \in O$ is the first outgoing $\epsilon$-edge. Also, define $T_L$ as the random variable representing at what timestep the chain leaves $P$. We will start by using the fact that we must leave $P$ via some first outgoing edge. %
\begin{align*}
&1 = \sum_{e\in O}\Pr[L_e]
=\sum_{e\in O}\sum_{t=0}^{t_1} \Pr[L_e \cap T_L = t] + \sum_{e\in O}\sum_{t=t_1 + 1}^{\infty} \Pr[L_e \cap T_L = t]
\end{align*}
Let $d$ be the periodicity of the pseudosink ignoring $\epsilon$-edges. Set $t_1 = \frac{1}{\sqrt{\epsilon}} + k$ where $k$ is an integer random variable chosen uniformly at random from 0 to $d - 1$. Define a constant $L_{max} = \max_{i\in N_M} \sum_{e = (i, j) \in E} c_e$ where $E$ is the set of epsilon edges in $M$. So we have that $L_{max}\epsilon$ upper bounds the probability of taking an $\epsilon$ edge at any particular timestep.
\begin{align*}
&\lim_{\epsilon\rightarrow 0}\sum_{e'\in O}\sum_{t=0}^{t_1} \Pr[L_e' \cap T_L = t]
\leq \lim_{\epsilon\rightarrow 0} \sum_{t=0}^{t_1} L_{max}\cdot\epsilon = \lim_{\epsilon\rightarrow 0} (\frac{L_{max}\epsilon}{\sqrt{\epsilon}} + L_{max}\epsilon k)= 0
\end{align*}

Let $\pi_{P}$ be the stationary distribution on the pseudosink $P$ using only regular edges. We can ignore the internal $\epsilon$-edges because the pseudosink is an rSCC and therefore the regular edges uniquely define the limiting stationary distribution \cite{hassin1992mean}. 

Define $X_t$ as the state of the MC at time $t$. For fixed $t' \geq 0$ and $x \in N_P$, at time $t = t_1 + t'$ we have $\Pr[X_t = x | T_L \geq t] = \pi_{P}[x] + n(t_1)$ where the randomness is over both the MC process and the choice of $k$. This is because while the MC has not left $P$, it is approaching the steady state of $P$ (or its subsequences are if $P$ is periodic). We have that $\vert n(t_1)\vert \leq a_Gb_G^{t_1}$ for some finite constants $a_P$ and $0 \leq b_G<1$ independent of $\epsilon$. This is due to the rate of convergence of subsequences of periodic Markov Chains \cite{bowerman1977convergence}. \\

Noting that the left hand side of our previous equation is independent of $\epsilon$, we can take the limit as $\epsilon\rightarrow0$.
\begin{align*}
1 &= \lim_{\epsilon\rightarrow 0}\sum_{e\in O}\sum_{t=t_1 + 1}^{\infty} \Pr[L_e \cap T_L = t]
= \lim_{\epsilon\rightarrow 0}\sum_{e\in O}\sum_{t=t_1 + 1}^{\infty}c_e\epsilon\Pr[X_t = x \cap T_L \geq t] \\
&= \lim_{\epsilon\rightarrow 0}\sum_{e\in O}\sum_{t=t_1 + 1}^{\infty}c_e\epsilon\Pr[X_t = x | T_L \geq t]\Pr[T_L \geq t]\\
&=\lim_{\epsilon\rightarrow 0}\sum_{e' = (x', y')\in O}c_{e'}(\pi_{P}[x'] + n(t_1))\sum_{t=t_1 + 1}^{\infty} \epsilon\Pr[T_L \geq t]\\
&\frac{1}{\lim_{\epsilon\rightarrow 0} \sum\limits_{e' = (x', y')\in O}c_{e'}(\pi_{P}[x'] + n(t_1))} = \lim_{\epsilon\rightarrow 0}\sum_{t=t_1 + 1}^{\infty} \epsilon\Pr[T_L \geq t]\\
&\frac{1}{\sum\limits_{e' = (x', y')\in O}c_{e'}\pi_{P}[x']} = \lim_{\epsilon\rightarrow 0}\sum_{t=t_1 + 1}^{\infty} \epsilon\Pr[T_L \geq t]
\end{align*}
Now we will compute $\lim_{\epsilon\rightarrow 0}\Pr[L_e]$.
\begin{align*}
&\Pr[L_e] =\sum_{t=0}^{t_1} \Pr[L_e \cap T_l = t] + \sum_{t=t_1 + 1}^{\infty} \Pr[L_e \cap T_l = t]
\end{align*}
We can show that the first term vanishes in the limit as $\epsilon\rightarrow0$.
\begin{align*}
\sum_{t=0}^{t_1} \Pr[L_e \cap T_l = t] &< \sum_{e'\in O}\sum_{t=0}^{t_1} \Pr[L_e' \cap T_L = t]\\
\lim_{\epsilon\rightarrow 0}\sum_{t=0}^{t_1} \Pr[L_e \cap T_l = t] &\leq \lim_{\epsilon\rightarrow 0}\sum_{e'\in O}\sum_{t=0}^{t_1} \Pr[L_e' \cap T_L = t] = 0
\end{align*}
Now we can substitute back in to our expression for $\Pr[L_e]$ and do a similar procedure as above.
\begin{align*}
&\lim_{\epsilon\rightarrow 0}\Pr[L_e] = \lim_{\epsilon\rightarrow 0}\sum_{t=\frac{1}{\sqrt{\epsilon}} + 1}^{\infty} c_{e}\epsilon(\pi_{G}[x] + n(t_1))\Pr[T_L \geq t]\\
&= \lim_{\epsilon\rightarrow 0}c_{e}(\pi_{P}[x] + n(t_1))\cdot\lim_{\epsilon\rightarrow 0}\sum_{t=\frac{1}{\sqrt{\epsilon}} + 1}^{\infty} \epsilon\Pr[T_L \geq t]\\
&=\frac{c_{e}\pi_{P}[x]}{\sum_{e' = (x', y')\in O}c_{e'}\pi_P[x']}
\end{align*}
\end{proof}
\begin{lemma}
Let $P$ be a pseudosink, and let the stationary distribution on $P$ without $\epsilon$-edges be denoted by $\pi_P$. The hitting probabilities to sink SCCs of the overall graph are not affected by collapsing $P$ to a single node $A_P$ with outgoing edges corresponding to $W(A_P, y)$.
\end{lemma}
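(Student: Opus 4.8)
The plan is to prove the lemma by a first-exit (strong Markov) decomposition, using Lemma \ref{tie edge coefficient} to control what happens inside $P$ as $\eps\to0$. Write $h_{iS}(\eps)$ for the hitting probabilities of the original chain $M$ at noise level $\eps$, with limits $\bar h_{iS}=\lim_{\eps\to 0}h_{iS}(\eps)$, and write $M'$, $h'_{iS}(\eps)$, $\bar h'_{iS}$ for the corresponding quantities of the chain obtained by collapsing $P$ to $A_P$ with the new regular edges $W(A_P,y)$. First I would show that all nodes of $P$ share one limit hitting probability. Since $P$ is a pseudosink it is not a sink SCC, so every absorbing set $S$ lies outside $P$ and a walk must exit $P$ before absorption; for each $\eps>0$ this exit occurs at an almost surely finite stopping time, so the strong Markov property gives the exact identity
\begin{equation*}
h_{xS}(\eps)=\sum_{y\notin P} q_{xy}(\eps)\,h_{yS}(\eps)\qquad(x\in P),
\end{equation*}
where $q_{xy}(\eps)$ is the probability that the first node visited outside $P$, started from $x$, is $y$. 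Letting $\eps\to0$ and invoking Lemma \ref{tie edge coefficient}, which gives $q_{xy}(\eps)\to W(P,y)$ \emph{independently} of the starting node $x\in N_P$, yields $\bar h_{xS}=\sum_{y\notin P}W(P,y)\,\bar h_{yS}$ for every $x\in P$. Thus all nodes of $P$ carry the common limit $\bar h_{PS}:=\sum_{y}W(P,y)\,\bar h_{yS}$, which both justifies identifying them with a single node $A_P$ and, because the exit law is entry-independent, justifies redirecting every edge that entered $P$ to $A_P$.

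The second and main step is to match $M$ and $M'$ on the remaining nodes. I would compare the two walks by an excursion decomposition: record only the nodes a walk of $M$ visits outside $P$, replacing each maximal sojourn inside $P$ (entered at some $x\in P$, exited at some $y\notin P$) by the single transition to $y$. Outside $P$ the transition laws of $M$ and $M'$ coincide, while a step that enters $P$ from $u$ and later exits to $y$ has limiting probability $\bigl(\sum_{x\in P}p_{ux}\bigr)W(P,y)$ --- exactly the two-step probability $u\to A_P\to y$ in $M'$, since $A_P$ carries only the regular edges $W(A_P,y)=W(P,y)$, which sum to $1$ and so act as a single Markov step. Hence the two processes induce the same limiting law over sequences of outside-$P$ nodes, and therefore the same absorption probabilities: $\bar h_{iS}=\bar h'_{iS}$ for $i\notin P$ and $\bar h_{PS}=\bar h'_{A_P,S}$. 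Equivalently, and perhaps cleaner to write up, one verifies that the collapsed limit vector $\bar h$ (with $P$ replaced by the single value $\bar h_{PS}$) satisfies the hitting-probability system \ref{hp1} of $M'$ and is its smallest non-negative solution, which pins down the hitting probabilities of $M'$ uniquely.

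The crux will be interchanging the limit $\eps\to0$ with the excursion decomposition: the coupling is not exact at any fixed $\eps$, since $q_{xy}(\eps)$ equals $W(P,y)$ only in the limit and a walk may make infinitely many excursions into $P$. The device that makes this rigorous is the time-scale separation already exploited in Lemma \ref{tie edge coefficient}: on the $t_1=1/\sqrt{\eps}$ scale the walk equilibrates to $\pi_P$ inside $P$ before any $\eps$-edge is likely to fire, so the probability of an ``early'' pre-equilibration exit vanishes with $\eps$, while the geometric bound $|n(t_1)|\le a_P b_P^{\,t_1}$ controls the residual error uniformly across excursions. I would therefore bound the total discrepancy between $h_{iS}(\eps)$ and the collapsed system uniformly in the number of excursions and then send $\eps\to0$, which completes the argument.
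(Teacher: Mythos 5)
Your proposal is correct and takes essentially the same route as the paper: the paper's proof is exactly an excursion decomposition, writing the hitting probability as a sum over paths grouped by the number of visits to $P$ and parameterized by entry/exit sequences, and then invoking Lemma~\ref{tie edge coefficient} to replace each sojourn's exit law by the entry-independent limit $W(A_P,y)$, with the $g\in N_P$ case handled at the end by the same first-exit identity you state first. The only presentational difference is that you phrase the decomposition via stopping times and flag the limit-interchange over infinitely many excursions explicitly, whereas the paper carries it out as a term-by-term limit of the factorized path sums.
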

\begin{proof}
Let $M$ be the $\epsilon$-MC immediately before collapsing $P$, and let $M'$ be the transformed $\epsilon$-MC. We will show that for an arbitrary $i$ and arbitrary sink SCC $S$, $\lim_{\epsilon\rightarrow0}h_{iS}(\epsilon) = \lim_{\epsilon\rightarrow0}h'_{iS}(\epsilon)$ where $h_{iS}(\epsilon)$ is the hitting probability from $i$ to $S$ in $M$ and $h'_{iS}(\epsilon)$ is the analogous quantity in $M'$.

Let $N_P$ be the set of nodes of pseudosink $P$. Define $\Psi_k$ to be the set of paths in $M$ that start at $i\not\in N_P$ and enter and exit $P$ exactly $k$ times. Define $\Psi'_k$ similarly to be the set of paths in $M'$ that visit collapsed node $A_P$ exactly $k$ times. Then we can write the hitting probabilities as in Eq. \ref{hp2}.
\begin{align*}
h_iS(\epsilon) &= \sum_{k = 0}^\infty \sum_{p \in \Psi_{iS}\cap \Psi_k} \Pr[p] \\
h'_iS(\epsilon) &= \sum_{k = 0}^\infty \sum_{p \in \Psi'_{iS}\cap \Psi'_k} \Pr[p]
\end{align*}
Observe that any path $p \in \Psi_k$ can be parameterized by two $k$ length vectors. The first, denoted $\hat{g} \in (N_P)^k$ represents the sequence of entry locations to $P$ of the path. To clarify, $g_j$ is the first node in $P$ that $p$ visits on its $j^{th}$ time entering $P$. The second vector, denoted $\hat{y} \in (N_M \backslash N_P)^k$, represents the sequence of exit locations from $P$ of the path. So $y_j$ is the first node not in $P$ that $p$ visits on its $j^{th}$ exit from $P$. Let $\Psi_{\hat{g}, \hat{y}} \subseteq \Psi_k$ be the set of paths parameterized by $\hat{g}, \hat{y}$.

For set of nodes $B$ and nodes $i \in B, j$, define $T_{B}[i, j] = \sum_{p\in \Psi_{i, j, B}} \Pr[p]$ where $\Psi_{i, j, B}$ is the potentially infinite set of paths from $i$ to $j$ in $M$ that have the property that all nodes on the path are in the set $B$ with the exception of the last node on the path if and only if $j\not\in B$. Define $T'_{B}[i, j]$ as the analogous quantity for paths in $M'$.

Let $N_M$ be the set of nodes of the $\epsilon$-MC $M$. Define $N_H = N_M \backslash N_P$, so $N_H$ is the set of nodes in $M$ that are not in $P$. For $k = 0$, $\Psi_{is} \cap \Psi_k$ only includes paths that travel through nodes in $N_H$. These nodes and the edges between them are unchanged by the collapse procedure implying that $\sum_{p\in \Psi_{iS} \cap \Psi_0} \Pr[p] = \sum_{p\in \Psi'_{iS} \cap \Psi'_0} \Pr[p]$.

For $k\geq 1$, we can write the sum of probabilities of all paths that start at $i$ and end at $S$, that visit $G$ exactly $k$ times as follows.
\begin{alignat*}{2}
&\sum_{p\in \Psi_{iS}\cap \Psi_k} \Pr[p] = \sum_{\hat{y}}\sum_{\hat{g}}\sum_{p\in \Psi_{iS}\cap \Psi_{\hat{g}, \hat{y}}} \Pr[p]\\
&=\sum_{\hat{y}}\sum_{g_1\in N_P}\sum_{g_2\in N_P}...\sum_{g_k\in N_P}T_{N_H}[i, g_1]\prod_{j=1}^{k-1} \left(T_{N_P}[g_j, y_j]T_{N_H}[y_j, g_{j + 1}]\right)\cdot T_{N_P}[g_k, y_k]T_{N_H}[y_k, S] \\
&=\sum_{\hat{y}}\sum_{g_1\in N_P}T_{N_H}[i, g_1]T_{N_P}[g_1, y_1]\cdot\prod_{j=2}^{k}\left(\sum_{g_j\in N_P}T_{N_H}[y_{j-1}, g_{j}]T_{N_P}[g_{j}, y_{j}]\right)T_{N_H}[y_k, S]
\end{alignat*}
Observe that $T_{N_P}[g, y]$ is the probability that the first state after leaving $P$ is $y$ (after entering $P$ at state $g$) and therefore by Lemma \ref{tie edge coefficient}, for all $g\in N_P$, $\lim_{\epsilon\rightarrow0}T_{N_P}[g, y] = W(A_P, y)$

Also, observe that $\sum_{g\in N_P} T_{N_H}[i, g] = T'_{N_H}[i, A_P]$ because every path ending at some node $g\in N_P$ in $M$ ends at $N_G$ in $M'$ and none of the edges used in these paths are affected by the collapse procedure. We also know that paths in $\Psi'_K$ can be parameterized using a single vector $\hat{y} \in (N_{M'} \backslash A_P)^k$, where $y_j$ is the node visited immediately after visiting $A_P$ for the $j^{th}$ time. 
\begin{align*}
\lim_{\epsilon\rightarrow0}\sum_{p\in \Psi_{iS}\cap \Psi_k}\Pr[p] &= \sum_{\hat{y}} T'_{N_H}[i, A_P]\prod_{j=1}^{k-1} (W(g, y_j) T'_{N_H}[y_j, A_P])W(g, y_k)T'_{N_H}[y_k, S]\\
&= \lim_{\epsilon\rightarrow0}\sum_{p\in \Psi'_{iS}\cap \Psi'_k}\Pr[p]
\end{align*}
Since the summations are equal for all $k$, we can see that the hitting probabilities from nodes $i \not\in N_P$ are preserved in the limit. All that is left to prove is that for $g\in N_P$, we have that $\lim_{\epsilon\rightarrow0} h_{gS}(\epsilon) = \lim_{\epsilon\rightarrow0} h'_{A_PS}(\epsilon)$. Let $H_{gS}$ be the event that we hit absorbing state $S$ given that $M$ started at $g$ (so $h_{gS}(\epsilon) = \Pr[H_{gS}]$). Recall that $L_e$ is the event that $e$ is the first outgoing edge from $P$ that the chain takes.
\begin{align*}
\Pr[H_{gS}] &= \sum_{e\in O} \Pr[H_{gS}\vert L_e]\Pr[L_e]\\
&= \sum_{e = (x, y)\in O} \Pr[H_{yS}]\Pr[L_e]\\
\lim_{\epsilon\rightarrow0}h_{gS} &= \sum_{y \not\in P} W(A_P, y)\lim_{\epsilon\rightarrow0}h_{yS}
\end{align*}
Eq.~\ref{hp1} for calculating the hitting probability of $A_P$ in $M'$ is exactly $h'_{A_PS} = \sum_{y \not\in P} W(A_P, y)\lim_{\epsilon\rightarrow0}h'_{yS}$. Since we have that the limiting hitting probabilities are maintained from $y\not\in N_P$, they are also maintained for $g\in N_P$.
\end{proof}

\subsubsection{Deletion of $\epsilon$-Edges at the Final Step}
\begin{lemma}\label{no tie edges taken} If for all nodes $i \in N_M$ have a regular path (only using regular edges) to an absorbing state, then $\forall i$, $\lim_{\epsilon\rightarrow0}P[V_{i}] = 1$ where $V_i$ is the event that the MC started at $i$ is absorbed before taking any $\epsilon$-edges.
\end{lemma}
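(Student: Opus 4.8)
The plan is to recast the claim as a statement about a single absorbing event and to show that the probability of \emph{ever} traversing an $\eps$-edge before absorption vanishes as $\eps\to 0$. Concretely, I would build an augmented chain $\widetilde M$ from $M$ by adjoining one new absorbing state $\dagger$ and rerouting every $\eps$-edge into it: from each node the regular edges are kept unchanged, while the entire outgoing $\eps$-mass (at most $L_{max}\eps$, by the per-step bound established in the proof of Lemma~\ref{tie edge coefficient}) is sent to $\dagger$. Then $V_i$ is exactly the event that $\widetilde M$ started at $i$ is absorbed in a genuine sink SCC rather than in $\dagger$, so it suffices to prove $\lim_{\eps\to 0}\Pr[\text{reach }\dagger]=0$ for every starting node. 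Since by hypothesis every node has a regular path to an absorbing state, in $\widetilde M$ every node reaches \emph{some} absorbing state (a genuine sink or $\dagger$) almost surely, so all the quantities below are well defined.

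The key estimate is a uniform-in-$\eps$ lower bound on the speed of genuine absorption. Let $n=|N_M|$. By hypothesis each non-absorbed node admits a regular path to a sink of length at most $n$; the probability of following such a path is a product of at most $n$ regular edge weights, each of the form $c_r-c_{re}\eps\ge c_r/2$ once $\eps$ is small enough, and is therefore bounded below by a constant $\rho>0$ independent of $\eps$. Hence from any live node the chain is absorbed in a genuine sink within the next $n$ steps with probability at least $\rho$, so the absorption time $\tau$ of $\widetilde M$ is stochastically dominated by $n$ times a geometric random variable with success probability $\rho$. In particular $\E[\tau]\le n/\rho$, a constant that does not depend on $\eps$.

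Finally I would bound the trap probability by a union bound over the steps taken while the chain is still live: at each such step the probability of jumping to $\dagger$ is at most $L_{max}\eps$, so writing $\{\tau\ge t\}$ for the event that the chain has survived its first $t-1$ transitions, we obtain $\Pr[\text{reach }\dagger]\le L_{max}\eps\sum_{t\ge 1}\Pr[\tau\ge t]=L_{max}\eps\,\E[\tau]\le (n/\rho)\,L_{max}\eps$, which tends to $0$ as $\eps\to 0$. Consequently $\lim_{\eps\to 0}\Pr[V_i]=1-\lim_{\eps\to 0}\Pr[\text{reach }\dagger]=1$, as claimed. The only delicate point is this last estimate: the union bound runs over an a priori unbounded and $\eps$-dependent number of live steps, and it is legitimate precisely because the survival probabilities are summable with $\sum_t \Pr[\tau\ge t]=\E[\tau]<\infty$, bounded uniformly in $\eps$ by the $\eps$-independent geometric lower bound $\rho$. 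Securing that uniformity is the main obstacle, and it is exactly what the hypothesis of an $\eps$-free regular path from every node to an absorbing state provides.
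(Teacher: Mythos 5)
Your proof is correct, and it rests on exactly the two quantitative ingredients the paper's own proof uses: a lower bound $\rho = w_{min}^{|N_M|}$ (the paper's $C_{min}$) on the probability that a live node gets absorbed along its designated regular path of length at most $|N_M|$, and the per-step upper bound $L_{max}\eps$ on the probability of traversing an $\eps$-edge. Where you differ is in how these two bounds are combined. The paper builds a three-state meta-chain (start, success, failure) in which one meta-step runs $M$ for $\len(p)$ steps from the current node, reaching success with probability at least $C_{min}$ and failure with probability at most $F(\eps)=1-(1-L_{max}\eps)^{|N_M|}$, and then solves the two-way race to get $\Pr[V_i]\ge C_{min}/(C_{min}+F(\eps))\rightarrow 1$. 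You instead reroute all $\eps$-mass to a trap state $\dagger$, bound the expected absorption time uniformly in $\eps$ by $\E[\tau]\le |N_M|/\rho$ via geometric domination, and conclude $\Pr[\text{reach }\dagger]\le L_{max}\eps\cdot\E[\tau]=O(\eps)$ by summing the per-step hazard over live steps. The two arguments are equivalent in strength --- both give a $1-O(\eps)$ bound with essentially the same constants --- but yours is arguably cleaner: the trap-state coupling makes the identification of $V_i$ with a single absorption event transparent, and the first-moment bound avoids the paper's slightly awkward meta-step whose length depends on the current state. You are also right that the crux is the $\eps$-uniform summability of $\sum_{t}\Pr[\tau\ge t]$; that is precisely what the hypothesis of a regular path from every node buys in both proofs.
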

\begin{proof} Consider the set of regular edges of $M$ which have weights of the form $c_r - c_{re}\epsilon$. Set $\epsilon \leq \frac{\min(c_r)}{2\max{c_{re}}}$, so that every regular edge has weight at least $w_{min} = \min(c_r)/2$. Now set $C_{min} = w_{min}^{|N_M|}$. Note that $C_{min}$ is a constant independent of $\epsilon$. For every node $i$ there exists a regular path $p_{iS}$ of length at most $|N_M|$ to an absorbing state, which must have probability at least $C_{min}$. In addition, recall that $L_{max}$ is a constant defined in Lemma \ref{tie edge coefficient}, such that $L_{max}\epsilon$ upper bounds the probability of taking an $\epsilon$-edge at any timestep.

We will calculate $P[V_{i}]$ by analyzing a new MC $M'$ that depends on $M$. This new MC begins in the "start" state, with the original MC at some node $i'$ which is not an absorbing state (initially this node is $i$). One step of $M'$ is as follows: It evolves the $M$ from $i'$ for $\len(p)$ steps, where $p$ is the path from $i'$ to an absorbing state that has probability $\geq C_{min}$. If $M$ ends up at some absorbing state (one way to do this is to take path $p$) then $M'$ moves to the absorbing "success" state. If $M$ takes any $\epsilon$ transitions during this evolution, then $M'$ moves to the absorbing "failure" state. If neither of these happen, $M$ will be at some non-absorbing state $i'$ and $M'$ will stay in the start state.

Observe that the probability that $M'$ reaches the success state is exactly $P[V_{i}]$, since $M'$ reaches the success state if and only if $M$ reaches some absorbing state before taking any $\epsilon$-edges. Denote the event that $M'$ reaches the success state by $V'_S$. The edge from the start state to the success state has probability at least $C_{min}$. The edge from the start state to the failure state has probability at most $F(\epsilon) = 1 - (1 - L_{min}\epsilon)^{|N_M|}$. This is because there are $len(p) \leq |N_M|$ steps of $M$ taken by a single step of $M'$ and the chance of taking an $\epsilon$ transition at each of those steps is upper bounded by $L_{max}\epsilon$. 

We can use these bounds on the probabilities of the edges of $M'$ to compute that:
\begin{align*}
\Pr[V'_S] &\geq (1\cdot C_{min}) + (0\cdot F) + (\Pr[V'_S]\cdot(1-C_{min}-F))\\
&\geq \frac{C_{min}}{C_{min} + F(\epsilon)}
\end{align*}
Since $\lim_{\epsilon \rightarrow 0} F(\epsilon) = 0$, $\lim_{\epsilon \rightarrow 0} P[V'_S] = 1$ and therefore $\lim_{\epsilon \rightarrow 0}P[V_{i}] = 1$.
\end{proof}

\begin{lemma}\label{can delete ties} If $\textsc{MaxOrder}(M) = 0$, deleting all $\epsilon$-edges (that are not within a sink SCC) does not affect the hitting probabilities.
\end{lemma}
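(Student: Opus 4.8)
The plan is to leverage Lemma~\ref{no tie edges taken} together with a squeeze argument. The hypothesis $\textsc{MaxOrder}(M)=0$ means precisely that every node of $M$ attains order $0$, i.e.\ every node has a regular path (using no $\epsilon$-edges) to some absorbing state; this is exactly the hypothesis of Lemma~\ref{no tie edges taken}, so $\lim_{\epsilon\to0}\Pr[V_i]=1$ for every $i$, where $V_i$ is the event that the chain started at $i$ is absorbed before ever traversing an $\epsilon$-edge. Let $M'$ denote the chain obtained by deleting every $\epsilon$-edge; since every non-absorbing node has order $0$ it has at least one outgoing regular edge, and because the original weights satisfy $\sum_{\text{reg}} c_r = 1$ at each node (the $\epsilon$-terms cancel in the normalization of $M$), $M'$ is a genuine absorbing Markov chain in which every transient state reaches an absorbing state with probability $1$. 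Write $h_{iS}(\epsilon)$ for the hitting probabilities in $M$ and $h'_{iS}$ for those in $M'$; the goal is $\lim_{\epsilon\to0}h_{iS}(\epsilon)=h'_{iS}$.

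First I would decompose each hitting probability according to whether absorption occurs before any $\epsilon$-edge is taken:
\begin{align*}
h_{iS}(\epsilon) = \Pr[H_{iS}\cap V_i] + \Pr[H_{iS}\cap \overline{V_i}].
\end{align*}
The second term is bounded above by $\Pr[\overline{V_i}] = 1 - \Pr[V_i]$, which tends to $0$ by Lemma~\ref{no tie edges taken}; hence it vanishes in the limit and $h_{iS}(\epsilon)$ is squeezed between $\Pr[H_{iS}\cap V_i]$ and $\Pr[H_{iS}\cap V_i] + \Pr[\overline{V_i}]$. It therefore suffices to show $\lim_{\epsilon\to0}\Pr[H_{iS}\cap V_i]=h'_{iS}$.

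The key observation is that $\Pr[H_{iS}\cap V_i]$ is exactly the probability of reaching $S$ in the \emph{sub-stochastic} chain $M''_\epsilon$ that retains only the regular edges of $M$, with their $M$-weights $c_r - c_{re}\epsilon$ (the remaining mass at each node being treated as leakage corresponding to taking an $\epsilon$-edge). Indeed, the event $H_{iS}\cap V_i$ is the disjoint union over all finite regular paths from $i$ to $S$, and the probability of such a path in $M$ is the product of its regular-edge weights, i.e.\ its probability in $M''_\epsilon$. As $\epsilon\to0$ the transition matrix of $M''_\epsilon$ converges entrywise to that of $M'$, so the only remaining issue is continuity of the hitting probability as a function of the transition probabilities.

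I expect this last continuity step to be the main obstacle, because a priori the sum over regular paths is infinite and one must justify exchanging the limit with it. The clean way to dispatch it is to pass to finite linear algebra rather than to paths: restricting to the transient block, the hitting probabilities of $M''_\epsilon$ are $(I-Q''_\epsilon)^{-1}R''_\epsilon\mathbf{1}_S$ and those of $M'$ are $(I-Q')^{-1}R'\mathbf{1}_S$, where $Q,R$ are the transient-to-transient and transient-to-absorbing blocks. Since $M'$ is absorbing, $I-Q'$ is invertible, and $Q''_\epsilon\to Q'$, $R''_\epsilon\to R'$ entrywise; continuity of matrix inversion at the invertible matrix $I-Q'$ then gives $(I-Q''_\epsilon)^{-1}\to(I-Q')^{-1}$ and hence $\Pr[H_{iS}\cap V_i]\to h'_{iS}$. (Alternatively one may invoke dominated convergence over paths, dominating each regular path's $M$-probability by its $M'$-probability since $c_r - c_{re}\epsilon\le c_r$, with $\sum_{p} \Pr_{M'}[p]=h'_{iS}\le1$ summable.) Combining the squeeze with this limit yields $\lim_{\epsilon\to0}h_{iS}(\epsilon)=h'_{iS}$, so deleting the $\epsilon$-edges leaves the limit hitting probabilities unchanged, as claimed.
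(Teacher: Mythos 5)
Your proof is correct and follows essentially the same route as the paper's: invoke Lemma~\ref{no tie edges taken} to show that the contribution of paths using at least one $\epsilon$-edge vanishes in the limit, then pass to the limit on the regular-path part. The only real difference is that you justify the final limit interchange carefully (via continuity of $(I-Q)^{-1}$ at the invertible limit, or dominated convergence over paths), whereas the paper simply asserts that termwise convergence of the regular edge weights suffices --- your version is the more rigorous one.
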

\begin{proof} By definition of order, if $\textsc{MaxOrder}(M) = 0$, every node in $M$ has a regular path to an absorbing state, so we can apply Lemma \ref{no tie edges taken} to get that for all states $i$, $\lim_{\epsilon\rightarrow0}P[V_{i}] = 1$ where $V_{i}$ is the event that $M$ transitions from $i$ to some absorbing state without taking any $\epsilon$-edges. 
\\
Let $M'$ be the graph with all $\epsilon$-edges removed and all edge weights of the form $c_r + c_{re}\epsilon$ set to $c_r$. Let $h'_{iS}$ be the hitting probability from $i$ to absorbing state $S$ in $M'$. Define $\Psi_r$ to be the set of paths in $M$ that only use regular edges, and $\Psi'_r$ to be the analogous set for $M'$.  Then $\Psi_{iS} \cap \Psi_r$ is the set of paths in $M$ that go from $i$ to $S$ only using regular edges. Define $P$ as the set of all paths in $M$, we have that $\Psi_{iS} \cap (\Psi \backslash \Psi_r)$ is the set of paths in $M$ that go from $i$ to $S$ using one or more $\epsilon$-edges.
\begin{align*}
&h_{iS}(\epsilon) = \sum_{p\in (\Psi_{iS} \cap \Psi_r)}\Pr[p] + \sum_{p\in (\Psi_{iS} \cap (\Psi \backslash \Psi_r))}\Pr[p]
\end{align*}
The probability of taking a path from $i$ to $S$ that has one or more $\epsilon$-edges must be less than the probability of taking a $\epsilon$-edge before absorption (since the first event is a subset of the second). 
\begin{align*}
&\sum_{p\in(\Psi_{iS} \cap (\Psi \backslash \Psi_r))}\Pr[p] \leq 1 - \Pr[V_{iJ}] \\
\lim_{\epsilon\rightarrow0}&\sum_{p\in(\Psi_{iS} \cap (\Psi \backslash \Psi_r))}\Pr[p] \leq \lim_{\epsilon\rightarrow0}(1 - \Pr[V_{i}]) = 0
\end{align*}
We can plug this limit into the hitting probability equation and use the property that, since the only structural difference between $M$ and $M'$ is the $\epsilon$-edges, $P_{iS}\cap P_r = P'_{iS}\cap P'_r$. In addition, for all regular edges $e \in E_M$ we have that $\lim_{\epsilon\rightarrow 0} W_M(e) = W_{M'}(e)$ due to our renormalization. Note that we can interchange limits because each weight on an regular edge converges to a positive constant. 
\begin{align*}
\lim_{\epsilon\rightarrow0}h_{iS}(\epsilon) &= \lim_{\epsilon\rightarrow0}\sum_{p\in(\Psi_{iS} \cap \Psi_r)}\Pr[p]
=\sum_{p\in(\Psi_{iS} \cap \Psi_r)}\prod_{e\in p}\lim_{\epsilon\rightarrow0} W_M(e)\\
&=\sum_{p\in(\Psi'_{iS} \cap \Psi'_r)}\prod_{e\in p} W_{M'}(e) = h'_{iS}
\end{align*}
Deleting the $\epsilon$ edges and re-normalizing the regular edges therefore has no effect on the limiting hitting probabilities.
\end{proof}

 \subsection{Running Time of the Algorithm}
Only steps 4, 5, and 6 of the algorithm have superlinear complexity.  For 4 we need to calculate the steady-state probabilities of the nodes of the pseudosink, because they are needed in the calculation of the weights of the edges leaving the collapsed pseudosink.  For 6, we need to compute the hitting probability in an ordinary graph (no $\eps$-edges).  Both of these problems can be solved in time $O(|E|^{1+\delta})$ for all $\delta >0$ \cite{cohen2017almost}. In Step 5, we do incremental maintenance of (r)SCCs. The fastest known algorithms for incremental SCC maintenance take amortized time $O(E^{1+\delta})$ \cite{chen2023almost}.

\section{Experiments}
\label{sec:experiments}

We have implemented our algorithm and experimented with random games with various values of the parameters $p$ (players) and $s$ (strategies per player) ranging for both parameters from $2$ to $12$.  In the next subsection, we present certain examples that exhibit interesting behavior viz. our algorithm.
Since our main message is a new way to view a game as a algorithmic map from a prior to a posterior distribution, 
in the second subsection we demonstrate how this works for various reasonably large games.  Given a prior distribution (typically the uniform distribution over all MSPs), we sample from this distribution and then simulate the noisy replicator.  We repeat until our convergence criteria are satisfied, and output the posterior distribution.  This accomplishes our overarching goal, the empirical computation of the meaning of the game.  We repeat this experiment for larger and larger games, taking this simulation to its practical laptop limits.
The code used to generate the entire section is available at \url{https://jasonmili.github.io/files/gd_hittingprobabilities_code.zip}.

\subsection{Some interesting games and their better-response graphs}

We use the following plotting conventions: in each better-response graph, every node of every sink SCC will be colored with a unique color. Other pure profile nodes of the graph will be depicted as a ``pie'' graph with colored areas that indicate the hitting probabilities towards each of the sink SCCs it reaches, as identified by the former colors (of the sink SCCs). Tie edges ($\eps$ edges) appear in the graph as bidirectional ``0.00'' edges; this is only for plotting convenience.

\subsubsection{$3 \times 3$ Game}
We start with a modified version of a game presented by \cite{papadimitriou2019game} that exhibits two sink SCCs: a directed cycle of length four (corresponding to a periodic orbit in the replicator space) and another that is a single pure profile (corresponding to a strict pure NE); see  \Cref{fig:game:georgios_jason_modified}.

\begin{figure}[h!]
\centering
\begin{minipage}[c]{0.5\textwidth}
    \includegraphics[width=\textwidth]{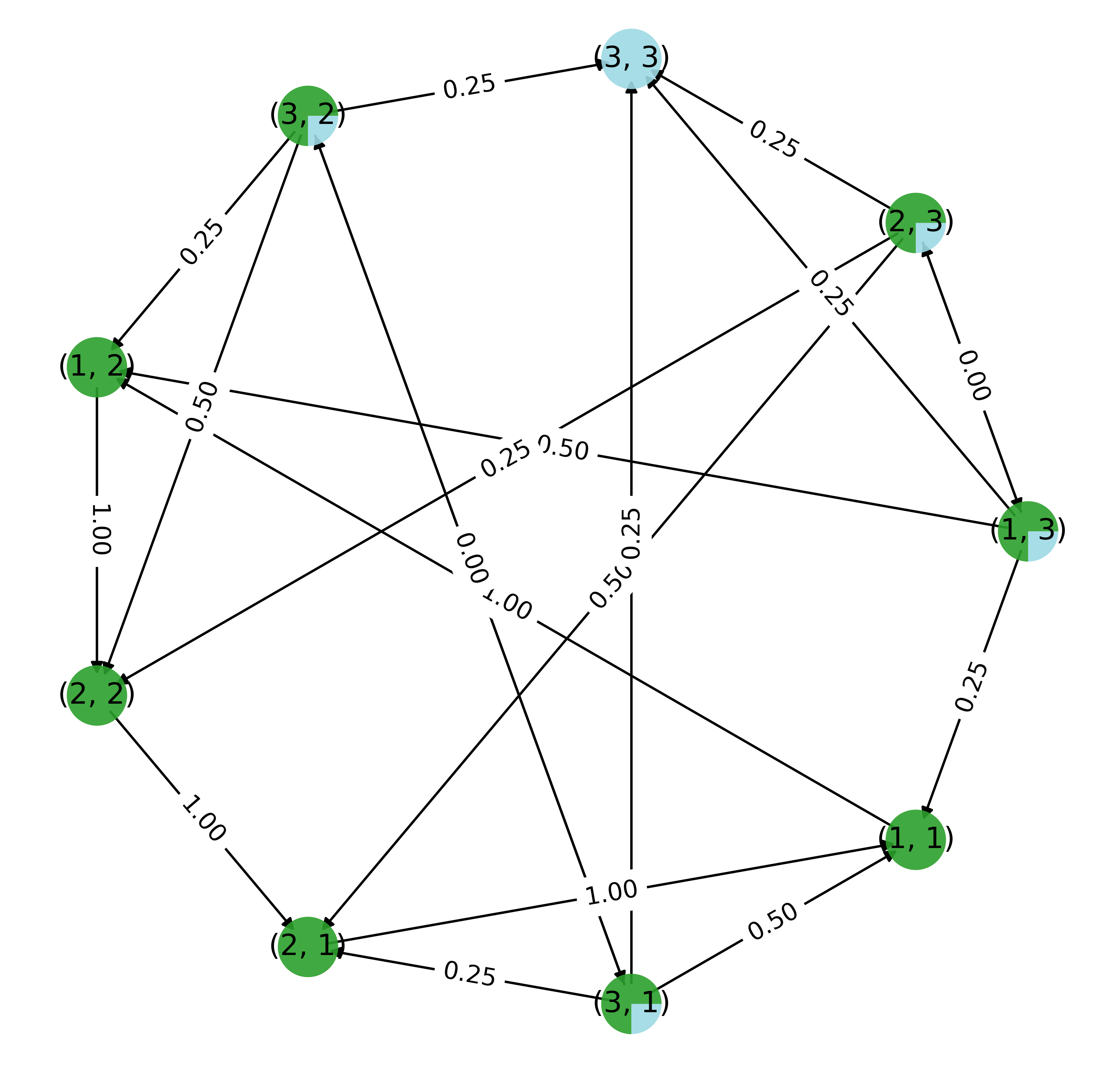}%
\end{minipage}
\hfill
\begin{minipage}[c]{0.29\textwidth}
\[
\begin{pmatrix}
2, 1 & 1, 2 & 0, 0 \\
1, 2 & 2, 1 & 0, 0 \\
0, 0 & 0, 0 & 1, 1 \\
\end{pmatrix}
\]
\end{minipage}
\caption{$3\times 3$ game. Left: the better-response graph. Right: the game utilities.}
\label{fig:game:georgios_jason_modified}
\end{figure}

\subsubsection{Game with Order 1 Profile}
We construct a game, depicted in \Cref{fig:game:rashida_jason} with two sink SCCs, and a pure profile of order 1 (which is also a pseudosink SCC) that needs exactly one tie edge to reach any sink SCC. Notice that the presence of this pure profile affects the hitting probabilities towards the sink SCCs, as described in \Cref{sec:algo}. This example shows that there are cases where a pure NE may not be a sink SCC, or as a matter of fact, not even inside \emph{any} sink equilibrium. That is, this NE is not {\em stochastically stable} in the terminology of \cite{peytonyoung}.
\begin{figure}[ht]
\centering
\begin{minipage}[c]{0.5\textwidth}
    \includegraphics[width=\textwidth]{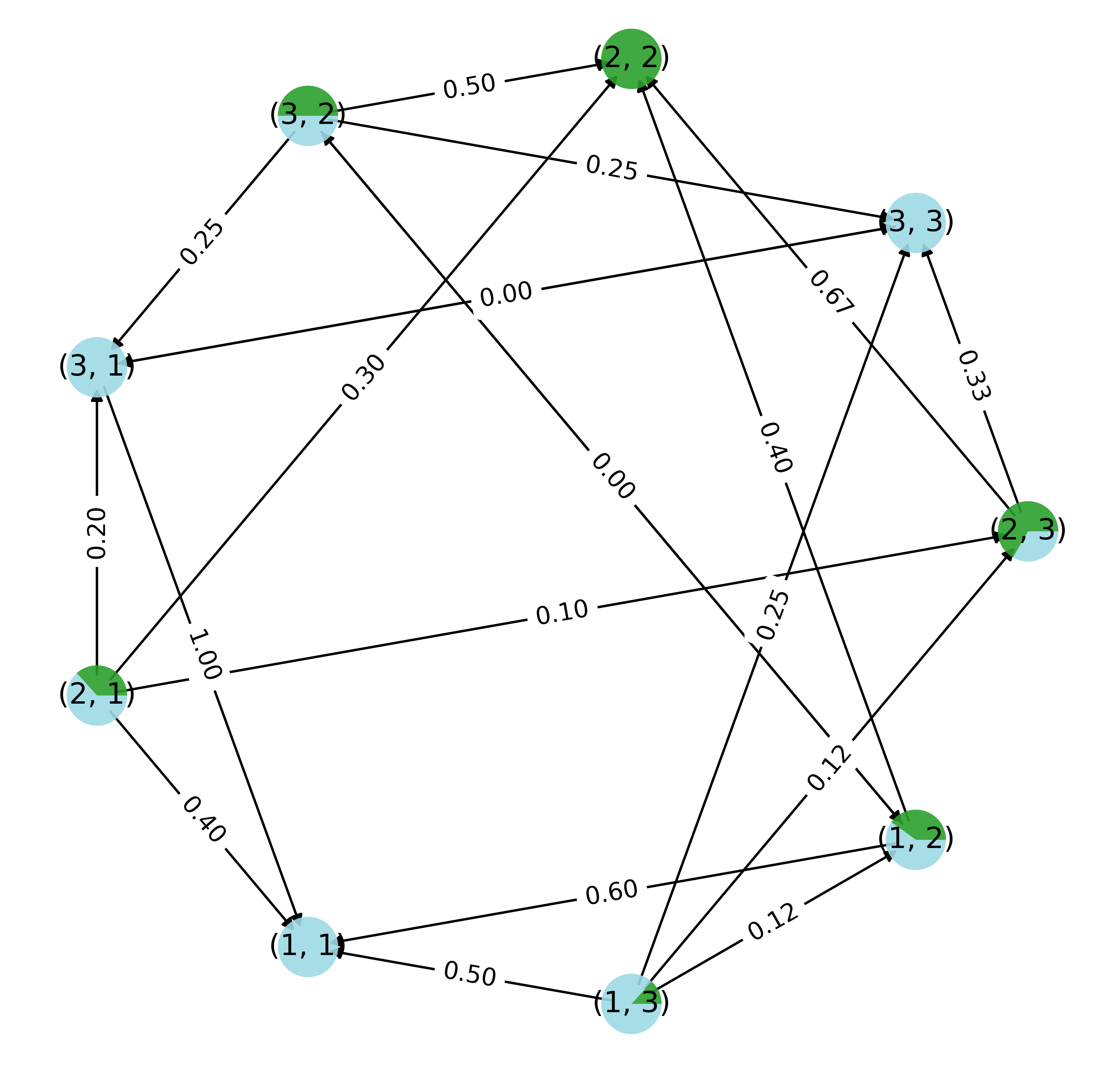}%
\end{minipage}
\hfill
\begin{minipage}[c]{0.29\textwidth}
\[
\begin{pmatrix}
4, 4 & 1, 1 & 0, 0 \\
0, 0 & 3, 3 & 1, 1 \\
2, 2 & 1, 1 & 2, 2 \\
\end{pmatrix}
\]
\end{minipage}
\caption{Tie game. The profile $(3,3)$ is order 1. Left: the better-response graph. Right: the game utilities.}
\label{fig:game:rashida_jason}
\end{figure}

\subsubsection{$3\times 3\times 3$ Game}

The utilities for this game can be found in our code. See \Cref{fig:game:high,fig:game:high_pie}.

\begin{figure}[!ht]
\centering
\subfloat[][The better-response graph of the game; nodes of sink SCCs are depicted in red.]{%
    \includegraphics[width=0.6\textwidth,height=6.5cm,keepaspectratio]{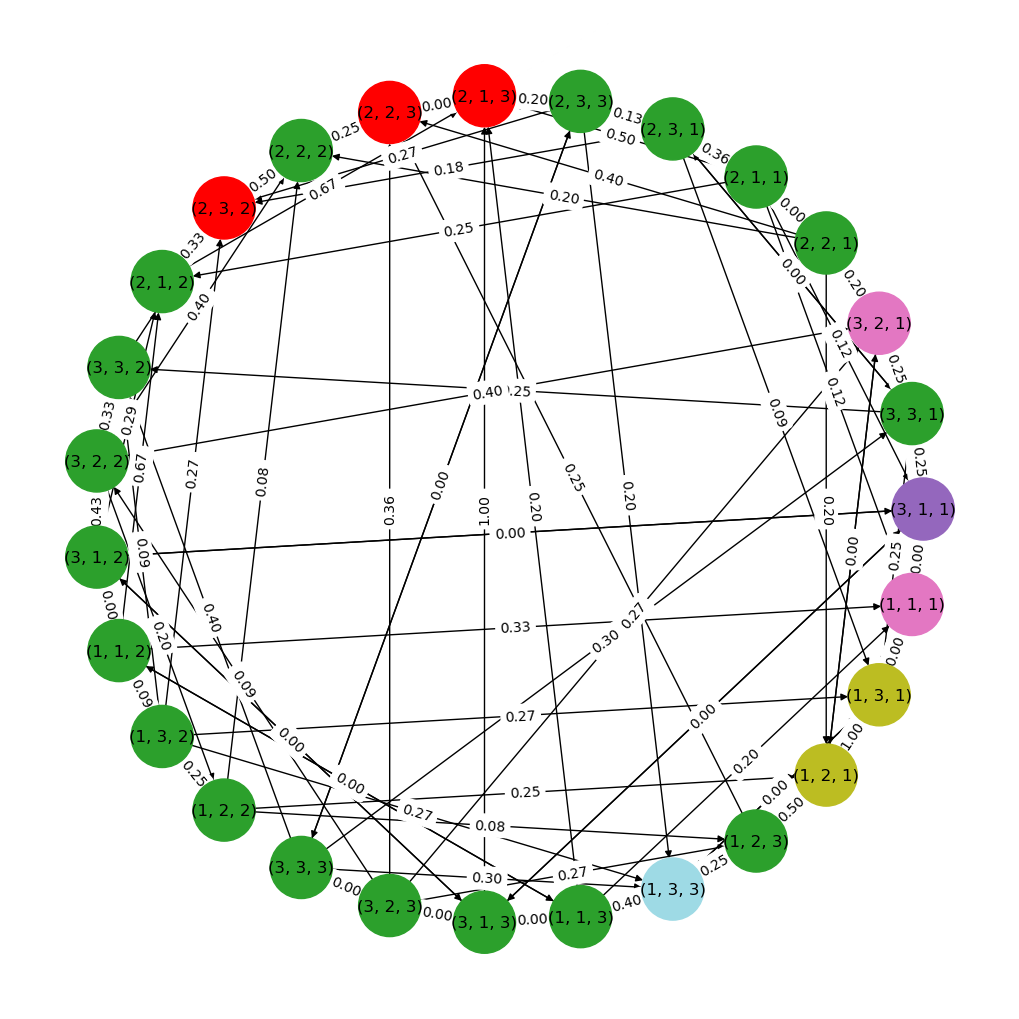}%
}\hfill
\begin{minipage}[b]{.35\textwidth}
    \subfloat[][For added clarity, we show a subgraph of (a) of all pure profiles of order $\ge 1$, along with the sink SCCs.]{%
        \includegraphics[width=0.8\textwidth,height=5cm,keepaspectratio]{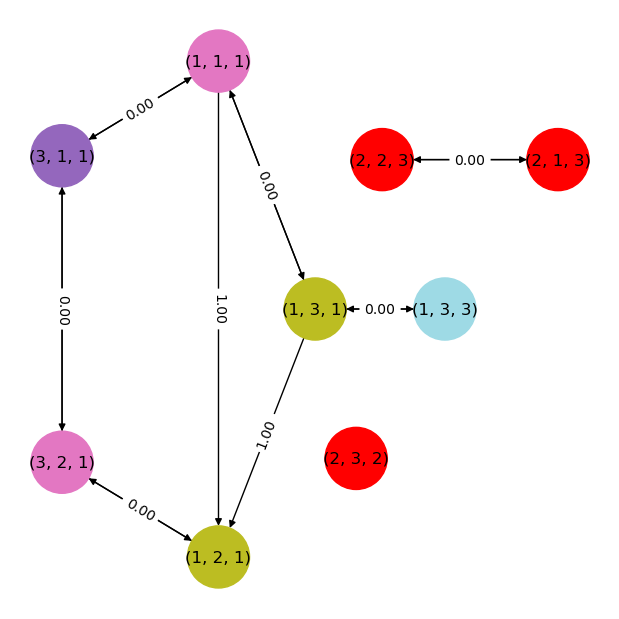}%
    }\vfill
    \subfloat[][The color coding that depicts the orders of the various pure-profile nodes.]{%
        \includegraphics[width=0.8\textwidth,height=1cm,keepaspectratio]{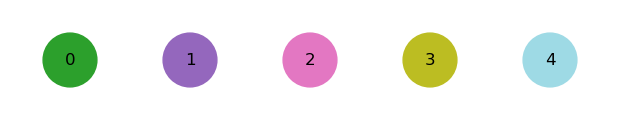}%
    }
\end{minipage}
\caption{The $3 \times 3 \times 3$ game.}
\label{fig:game:high}
\end{figure}

\subsection{Convergence Statistics}
\paragraph{Methodology.} We generate games of various sizes and random utilities (see the figures below), and we carry out a number of independently-randomized experiments of running noisy replicator dynamics (RD) on each. For each sampled point of the prior distribution (typically uniform), we run multiple independently-randomized instances of the noisy RD to obtain an empirical distribution. We consider the outcome of the game as the {\em empirical last-iterate distribution}, i.e., the average of all obtained distributions after run $T$. We keep track of the total variation (TV) distance between the running average distribution (e.g., at time $t<T$) and the ex-post empirical last-iterate (average at time $T$).
We consider that a distribution has achieved good enough convergence when the TV distance is less than 1\% --- we found that this is roughly the accuracy that is feasible in a laptop-like experimental setup. All calculations in this section were performed on an Apple M2 processor with the use of multi-threading with 8 parallel threads.

\begin{figure}[h!]
    \centering
    \includegraphics[width=0.5\textwidth]{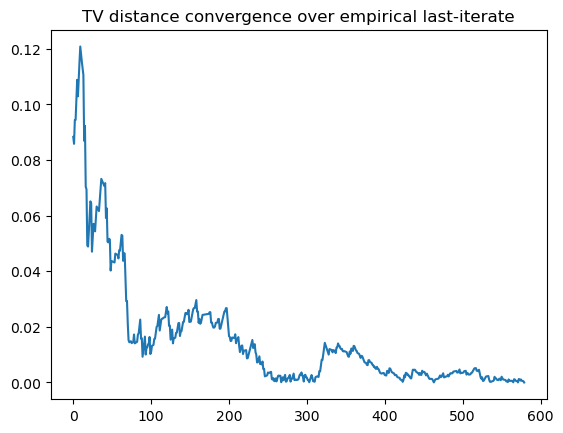}
    \caption{Convergence example. 40 independent runs of noisy RD were used for each sample.}
    \label{fig:high_conv}
\end{figure}

\begin{figure}[h!]
\centering
\subfloat[\label{fig:unif_conv_2s}][2-player, $s$-strategy games.]{%
    \includegraphics[width=0.5\textwidth,height=5.2cm,keepaspectratio]{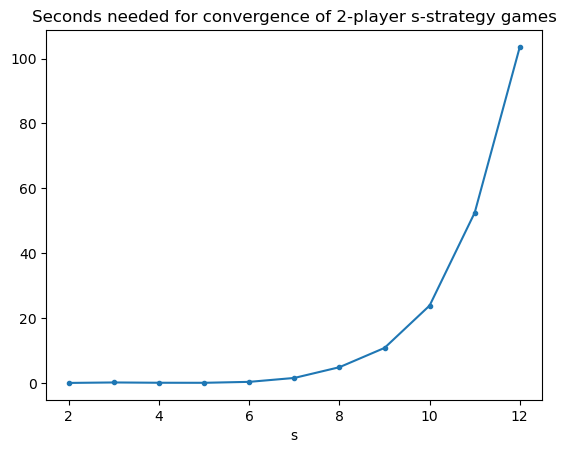}%
}\hfill
\subfloat[\label{fig:unif_conv_3s}][3-player, $s$-strategy games.]{%
    \includegraphics[width=0.5\textwidth,height=5.2cm,keepaspectratio]{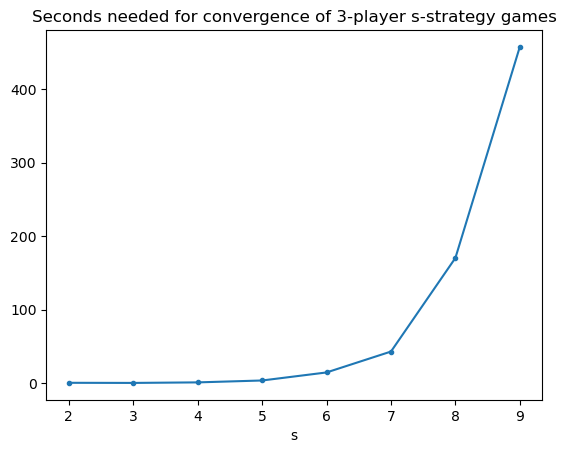}%
}
\end{figure}

\begin{figure}[h!]
\centering
\subfloat[\label{fig:unif_conv_n2}][$n$-player, 2-strategy games.]{%
    \includegraphics[width=0.499\textwidth,height=5.2cm,keepaspectratio]{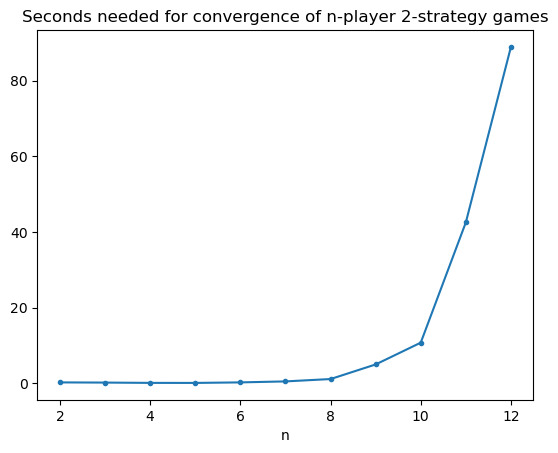}%
}\hfill
\subfloat[\label{fig:unif_conv_n3}][$n$-player, 3-strategy games.]{%
    \includegraphics[width=0.499\textwidth,height=5.2cm,keepaspectratio]{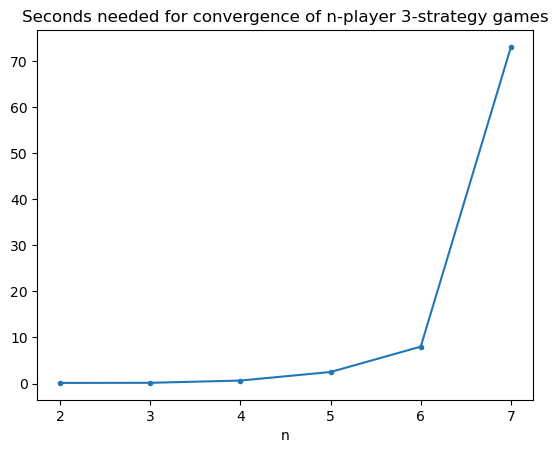}%
}
\end{figure}
\begin{figure}[ht!]
    \centering
    \includegraphics[width=0.6\textwidth]{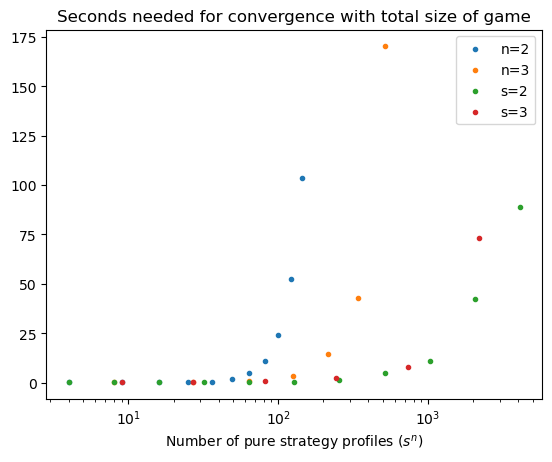}
    \caption{Convergence of our algorithm with total size of the game.}
    \label{fig:unif_conv_all}
\end{figure}

\section{Discussion and Open Questions}
We have proposed that a useful way of understanding a game in normal form is as a map between a prior distribution over mixed strategy profiles to a distribution over sink equilibria; namely, the distribution induced by the noisy replicator dynamics if started at the prior.  We showed that this distribution can be computed quite efficiently starting from any pure strategy profile, through a novel algorithm that handles the infinitesimal transitions associated with tie edges.  By implementing this algorithm and dynamical system we conducted experiments which we believe demonstrate the feasibility of this approach to understanding the meaning of a game. 
\\ \\
There are many problems left open by this work.
\begin{itemize}
\item In our simulations we approximated the meaning of the game for quite large games. We believe that more sophisticated statistical methods can yield more informative results for larger games. Another possible front of improvement in our simulations would be a better theoretical understanding of the trade-off between the parameters $\delta$ and  $\eta$ of the dynamics --- the length of the jump and the intensity of the noise.

\item Under which assumptions do the sink equilibria coincide with the chain recurrent components of the replicator dynamics (the solution concept suggested by the topological theory of dynamical systems)? Sharpening the result of Biggar and Shames in this way is an important open problem. On the other hand, a counterexample showing that it cannot be sharpened would also be an important advance; we note that experiments such as the ones in this paper are a fine way of generating examples of systems of sink equilibria which could eventually point the way to a counterexample.
Another question in the interface with the topological theory is, does the time average behavior within a sink SCC correspond to the behavior within a chain component of the replicator?

\item It would be very interesting to try --- defying PSPACE-completeness --- to compute sink equilibria and simulate the noisy replicator on succinct games such as extensive form, Bayesian, or graphical.
\end{itemize}

\bibliographystyle{splncs04}
\bibliography{refs}

\end{document}